\newif\if@restonecol  
\definecolor{dg}{RGB}{0,204,0}
\definecolor{dr}{RGB}{88,0,0}
\definecolor{db}{RGB}{0,0,88}
\newcommand{\Bigprt}[1]{\Big( #1 \Big)}
\newcommand{\bigprt}[1]{\big( #1 \big)}
\newcommand{\brkt}[1]{\left[ #1 \right]}
\newcommand{\set}[1]{\{ #1 \}}
\newcommand{\E}{\mathcal{E}}
\renewcommand{\P}{\mathbb{P}}
\newcommand{\ie}{\textit{i.e.}}
\newcommand{\esps}[2]{\mathbb{E}_{#1}\brkt{#2}}
\newcommand{\etal}{\textit{et al. }}
\newtcolorbox{mybox}[3][]
{colback=white,
colframe=#2!20,
colbacktitle=#2!20!,
coltitle=#2!20!black,
title={#3},
#1
}
\newtheorem{proposition}{Proposition}
\newtheorem{proposition-definition}{Proposition-definition}
\newtheorem{theorem}{Theorem}
\newtheorem{remark}{Remark}
\newcommand{\PreserveBackslash}[1]{\let\temp=\\#1\let\\=\temp}
\newcolumntype{C}[1]{>{\PreserveBackslash\centering}p{#1}}
\providecommand{\keywords}[1]
{
  \small	
  \textbf{\textit{Keywords---}} #1
}
\title{The self-exciting nature of the bid-ask spread dynamics}
\author[1]{Ruihua RUAN\thanks{Corresponding author.
Email: ruan@ceremade.dauphine.fr}}
\author[1]{Emmanuel BACRY}
\author[2]{Jean-François MUZY}
\affil[1]{CEREMADE, CNRS-UMR 7534, Universit\'e Paris-Dauphine PSL\protect\\
Place du Mar\'echal de Lattre de Tassigny, 75016 Paris, France}
\affil[2]{SPE CNRS-UMR 6134, Universit\'e de Corse
BP 52, 20250 Corte, France}
\begin{document}
\maketitle

\begin{abstract}
The bid-ask spread, which is defined by the difference between the best selling price and the best buying price in a Limit Order Book at a given time, is a crucial factor in the analysis of financial securities. In this study, we propose a "State-dependent Spread Hawkes model" (SDSH) that accounts for various spread jump sizes and incorporates the impact of the current spread state on its intensity functions. We apply this model to the high-frequency data from the Cac40 Euronext market and capture several statistical properties, such as the spread distributions, inter-event time distributions, and spread autocorrelation functions. We illustrate the ability of the SDSH model to forecast spread values at short-term horizons.
\end{abstract}
\keywords{Spread dynamic, Hawkes process, High-frequency data, Market microstructure, Ergodic Markov process}
%------------------------
\section{Introduction}
The bid-ask spread, defined  at a given time in a Limit Order Book by the difference between the smallest ask (selling) price and the largest bid (buying) price, is a quantity of great interest for financial securities. It corresponds to the cost of an ``immediate" transaction as respect to a more patient one.
Many  studies in economics literature are devoted to the bid-ask spread which is,
in general, decomposed into order processing costs, adverse selection costs and inventory risk (for the liquidity providers/market makers who earn money from the spread) \cite{glosten1988estimating, stoll1989inferring, huang1997components,glosten1985bid}.   
The spread is often used as a proxy for market liquidity 
a small spread being associated with very liquid markets. For a literature review of liquidity measures, we refer the reader to  \cite{diaz2020measuring,goyenko2009liquidity, fong2017best,fosset2020non}.  Let us also mention that the bid-ask spread has been proved to be very closely related to realized volatility. Indeed, many empirical studies \cite{goyenko2009liquidity,fong2017best} tend to prove that these two quantities are highly positively correlated \cite{bessembinder1994bid,zumbach2004trading,wyart2008relation,dayri2015large}.
% (\textcolor{blue}{Important : \cite{ponzi2006market,wyart2008relation}}.)
Many other statistical properties of the bid-ask spread have been the focus of various works. For instance, it has been shown that spread has a distribution that is fat-tailed and its dynamics is characterized by a long range (power-law) auto-correlation function \cite{mike2008empirical,plerou2005quantifying,bouchaud2018trades,bouchaud2009markets,gross2013predicting, fall2021forecasting}. In refs
\cite{mike2008empirical,ponzi2006market,zawadowski2006short}, it is shown that after a large variation of the spread (i.e., a temporary liquidity crisis), the spread decays slowly back to an equilibrium value. Let us note that, though, most of the time, these statistical properties are obtained through direct empirical studies on historical spread time-series, some papers tackle the statistical properties of the spread (mainly the distribution of the spread values) via some statistical models of the limit and market order flows
\cite{bouchaud2002statistical,daniels2003quantitative,smith2003statistical,foucault2005limit,rocsu2009dynamic,abergel2013mathematical,muni2017modelling}.
The reader will find a good overview of the different statistical properties of the bid-ask spread in the chapter 7 of  \cite{bouchaud2009markets}. 

%In order to understand the properties of market prices and their formation, 
%Trying to understand the properties of market prices and their formation is a longstanding issue that has been notably addressed, in the context of electronic markets, through the high-frequency dynamics of limit order book (LOB) events. In this context, many approaches involving point processes have been proposed whether to account for the dynamics of specific quantities like the  the market activity \cite{bowser07,toke2012,hardiman}, the mid-price \cite{b13}, the first (L1) book levels \cite{large07, bacry2016estimation,zheng2014modelling} or to describe the  behavior of the full order book (\cite{smith2003statistical,cont10,abergel2013mathematical}) . 

In order to understand the properties of market prices and their formation, 
in the context of electronic markets, many approaches involving point processes have been proposed to describe the occurrence of order book events (see e.g. \cite{smith2003statistical,Cont10, abergel2013mathematical}) .  Within this context, as reviewed in \cite{BacryRev2015}, Hawkes processes is a very popular class of models that has proven very efficient to describe the dynamical properties of different quantities like the market activity \cite{bowser07,toke2012,hardiman,morariu2018state}, the mid-price \cite{b13}, the best bid / best ask prices \cite{Kyungsub22} or the first (L1) book levels \cite{large07, bacry2016estimation,zheng2014modelling}. 
Hawkes processes constitute a class of multivariate point processes that were introduced in the seventies by A.G. Hawkes \cite{hawkes1971point} notably to model the occurrence of seismic events.  They involve an intensity vector that is (in its original form) a simple linear function of past events. This popularity of Hawkes processes can be explained above all by their great simplicity and flexibility. 
Our main purpose in this paper is to study whether one can design a dynamical model for 
spread fluctuations based on Hawkes processes. Though many studies can be found in the literature about the bid-ask spread and its statistical properties, very few offer  models for its high-frequency dynamics.  
Besides, some recent econometric approaches involving long-memory auto-regressive, Poisson point processes \cite{gross2013predicting,CATTIVELLI2019}, one can mention few models that rely on Hawkes processes. In 2014, Zheng \etal proposed in \cite{zheng2014modelling} one of the first model for the bid-ask spread dynamics of a financial asset. It uses a constrained 2-dimensional Hawkes process. The spread $S_t$ is seen as a strictly positive integer value (a multiple of the tick value, i.e., the minimum increment defined by the market between two quotation values). The first (resp. second) dimension of this Hawkes process $S^+_t$ (resp. $S^-_t$) is used for coding the positive (resp. negative) jumps of the spread, it is an increasing function of time that jumps by 1 whenever a positive jump of the spread occurs. 
In this model all the jumps of the spread are considered to be of size 1 tick.  
Thus, one gets $S_t = S^+_t-S^-_t$. The constrained Hawkes process is thus defined by the following expression of the intensity $\lambda^+$ (resp. $\lambda^-$) of the jump process $S_t^+$ (resp. $S_t^-$) : 
\begin{equation}\label{spread0}
\begin{split}
\lambda^+_t &= \mu^+ + \sum_{e\in\{+,-\}}\int_0^t \phi^{+,e}(t-s) dS_s^e\\
\lambda^-_t &= 1_{S_{t-}\geq 2}(\mu^- + \sum_{e\in\{+,-\}}\int_0^t \phi^{-,e}(t-s)dS_s^e)\\
\end{split}
\end{equation}
where $\mu^+$ (resp. $\mu^-$) corresponds to a  constant exogenous intensity (a term that does not depend on the past events)  
and the 4 functions  $\{\phi^{e,e'}(t)\}_{e,e' = \pm}$ are kernel causal (i.e., support in $[0,+\infty[$) functions that encode the (endogenous) influence of past jumps (of type $e'$) on the occurrence of future jumps (of type $e$).  Let us point out that for estimation purpose, the kernels are often taken to be exponential functions (or sum of exponential functions). Indeed, 
for such functions the likelihood is explicit and can be computed in a fast way. In \cite{zheng2014modelling}, simple exponential kernels were chosen :  
$\phi^{e,e'}(t) = \alpha^{e,e'} \beta e^{-\beta t}$ ($\beta$ is the same for all the kernels).
Let us point out that, Zheng \etal introduced a non-linearity (the term $1_{S_{t-}\geq 2}$)  in the "classical" definition of a Hawkes process. This non-linearity is necessary to constrain the spread to keep from taking negative or zero values. In their work, Zheng \etal performed numerical estimation of their model (using Maximal Likelihood Estimations) and studied some statistical properties of their model. 
In a more recent paper \cite{fosset2020endogenous}, the authors build a simplified version of the previous model and focus on the relation between spread dynamics and liquidity crises. They introduce the constrained 2-dimensional Hawkes process defined by
\begin{equation}\label{spread1}
\begin{split}
    \lambda^+_t &= \mu^+ + \int_0^t \alpha\beta e^{-\beta(t-s)}dS_s^+\\
    \lambda^-_t &= \mu^- 1_{\{S_t\geq2\}}
\end{split}
\end{equation}
where the notations are exactly the same as in model (\ref{spread0}). They use $\alpha_c = 1-\frac{\mu^+}{\mu^-}$ to distinguish different regimes. When $\alpha<\alpha_c$, the Hawkes system is stable and the invariant distribution of spread is given. When $\alpha_c < \alpha < 1$, the model is still stable but the spread has a linear growth. When $\alpha>1$, the system is explosive and there is a liquidity crisis.
%Their another paper \cite{fosset2020non} uses the effective spread to measure the liquidity.  

%However, the above two papers are not rich in applications and statistics on market data. With respect to the stationarity, \cite{zheng2014modelling} have proved the stationarity of the spread variable under certain conditions, and in \cite{fosset2020endogenous}, when there is a stationary distribution for spread process, the probability $\mathbf{P}(S=n)$ is decreasing for $n\geq 2$, which gives a strong constraint to this model, \ie  the spread should have its maximum probability at either $1$ or $2$. However the true market data does not always respect this constraint (see Figure \ref{fig:spread_dist} in Section \ref{spread_dist}).

% \ruihua{?? petit paragraphe expliquant ce qui manque au modèles ci-dessus ... pour justifier le fait qu'on va faire un autre modèle !}
In this paper we propose an approach which is inspired by these two papers but that will focus more on fitting
empirical features and matching observed properties from market data. More specifically,
we propose a new ``State Dependent Spread Hawkes'' (SDSH) model in order to account for the bid-ask spread fluctuations. Our model can be seen as a generalization of the two previously introduced models in the sense that it is a 2$K$-variate ($K$ possibly greater than 1) Hawkes process that is able to account for the different jump sizes (up to size $K$). It includes enough kernels to encode the influence of the various past jumps on the various future jumps. Each component includes a non-linearity component, as in the previous models, in order to ensure strictly positive values for the spread. In order to allow more complex dynamics, each kernel is written as a sum of several exponential functions (and not only one, as in the previous models). 
Our ambition is  to capture all the aspects of the dynamics of the bid-ask spread that were not captured by the previous models. 

The paper is organized as follows. 
The next section (Section \ref{sec:model_sec2}) is devoted to a detailed definition of our new model for the bid-ask spread. The Markov and ergodicity properties of this model is studied in a particular case. 
Section \ref{sec:simu_estim} illustrates this model with some numerical simulations and presents the estimation procedure on these simulations. Section \ref{sec:res}, the core of our paper, is devoted to empirical results. This sections is divided in 4 subsections. The first two ones correspond to a quick presentation and basic statistical properties of the real financial time-series that will be used all along the section. The third one is devoted to estimation of our model on these data. Parameters estimation (including the kernels and the constraint parameters) are discussed thoroughly. The last one is devoted to the goodness of fit of the model through different quantities, mainly : the inter-event time distribution, the spread distribution and the autocorrelation function. Section \ref{sec:pred} provides a first approach that demonstrates the interest of the SDSH model in the issue of short-term forecasting of spread values. We conclude in section \ref{sec:conclusion}.   

%------------------------
\section{Modeling spread dynamics using State Dependent Spread Hawkes processes}\label{sec:model_sec2}

\subsection{Notations}
Let us consider a limit order book (LOB) associated with a given asset. We note 
$S_{t}$ the bid-ask spread of this order book at time $t$, i.e., the difference, at time $t$, of the best ask price and the best bid price. We choose to express $S_t$ in {\em tick} units (the {\em tick} corresponds to the smallest price increment authorized by the market). $S_t$ is therefore a right continuous process that can  take only integer (strictly positive) values (i.e., $S_t\in \mathbb{N}^*$). Its smallest possible value is $S_t=1$ (tick). 

The process $S_t$ can be seen as a jump process and various orders sent to the LOB, depending on their type and size may induce jumps on $S_t$ of various sizes. In our model, an event corresponds to a jump of $S_t$ of a given size. In practice, very large jumps almost never occur, so with no loss of generality, we can limit the set of events to $\mathcal{E} = \{+1, +2,...,+K, -1, -2,...,-K\}$ (where $K$ is a hyper parameter of the model to be fixed for each asset).  

For each event $e \in \mathcal{E}$ (i.e., for each jump size $e$), we denote by $S^e_t$ the counting process that counts (along time, starting arbitrarily, at time 0) the number of jumps of size $e$ that occurs. Thus, the spread process $S_t$ can written as:
\begin{equation}
\label{eq:S}
S_t = S_0 + \sum_{k=1,2,...,K}kS^{+k}_t - \sum_{k=1,2,...,K}kS^{-k}_t.
\end{equation}
In real limit order books, it is clear that the dynamics of the spread jumps highly depend on the size of the jumps, it is thus natural to build a model that allows different dynamics for different jump sizes. 

\subsection{The SDSH spread model}
\label{sec:model}
Our model for spread dynamics basically consists in considering that the multivariate counting process $\{S_t^{e}\}_{e \in \mathcal{E}}$ follows a $2K$-variate Hawkes process. However, since the dynamics of the various events depend on the state of the current spread $S_t$ itself (the least being that when the spread is 1 tick, no negative event can occur), we introduce, in the classical Hawkes framework, a term that depends on $S_t$, i.e., a {\em state variable} that accounts for the current size of the spread. More precisely, if we note $\lambda^e_t$ the conditional intensity (at time $t$) of the counting process $S^e_t$, the State Dependent Spread Hawkes spread model writes:  
\begin{equation}
\label{model} 
\lambda^e_t = f^{e}(S_{t-}) 
\left[
\mu^e + \sum_{e'\in\mathcal{E}} \int_0^t \phi^{e, e'}(t-s) dS^{e'}_s
\right] 
\end{equation}
where (following the classical Hawkes processes framework) 
\begin{itemize}
\item $\mu^e$ is the exogenous base intensity for the jumps of size $e$ 
\item $\Phi(t) = \{\phi^{e,e'}(t)\}_{e,e' \in {\cal E}}$ is the Hawkes kernel matrix whose elements are the Hawkes (positive valued) interaction kernels, encoding the influence of past jumps of size $e'$ on future jumps of size $e$. 
\item we assume that these kernels are parameterized as a sum $L$ exponentials, i.e.,
\begin{equation}
    \label{eq:phi}
    \phi^{e,e'}(t)=\sum_l^L\alpha_l^{e,e'}\beta_l e^{-\beta_l t},
\end{equation}
(this is hardly restrictive since many behaviour can be easily reproduced by a sum of exponentials \cite{bochud2007optimal})
% \ruihua{je ne sais pas... \cite{hansen2015lasso} ??})
% ({\bf Ruihua ??? tu peux regarder \url{https://arxiv.org/pdf/physics/0605149.pdf}})
\end{itemize}
and where the current size of the spread is taken into account through the global multiplicative term
$f^e(S_{t_-})$ where 
\begin{itemize}
\item $S_{t^-}$ is the the left limit of $S$ at time $t$ (i.e., the spread size "just before" time $t$), 
\item $f^e$ is a non-negative function defined on $\mathbb{N}^*$, with the constraint that $f^{-k}(n)=0$ when $n-k\leq 0$ (in order to prevent jumps that would lead to a non strictly positive value for the spread). Let us note that the Equation \eqref{eq:phi} does not change if we multiply $f^e$ by a factor and divide $\mu^e$ and the $\phi^{e,e'}$ by the same factor. So, in the following,  we will fix arbitrarily the first non zero value of  $f^e(s)$ to be 1, i.e., $f^e(\min_s\{s,~f^e(s) \neq 0\}) = 1$.
\end{itemize}
Let us point out that, in order to account for the impact of the current spread size on the different dynamics of the spread jumps, several choices could have been made. 
Maybe, the most natural one would have been to make the kernel themselves depend on the spread size $S_{t_-}$. However, such a choice would lead to increase drastically the number of parameters needed to encode the kernels themselves, leading to important estimation instabilities.
The choice we made allows several things at once 
\begin{itemize}
    \item keeping the number of estimation parameters low
    \item encoding in an easy way the fact that negative values for $S_t$ are not allowed
    \item allowing the model to account for the well known fact that the spread dynamics is mainly mean reversing and, the higher the spread size is, the higher is the probability for large negative jumps to occur. 
\end{itemize}

\subsection{Markov property and ergodicity} 
\label{sec:markov}
Using Eq. \eqref{eq:S}, it is easy to prove that, since the Hawkes kernels are sums of exponential functions,  then the following property holds:
% $$\mathbb{E}[\frac{dS_t}{dt}\mid \mathcal{F}_t] = \sum_{k=1,2,...,K}k\lambda^{+k}(t) - \sum_{k=1,2,...,K}k\lambda^{-k}(t)$$
\begin{proposition}
The process $(S_t, X_t)$, where $X^{e,e'}_t := \int_0^t  \phi^{e,e'}(t-s) dS_s^{e'}$,  is a Markov process
\end{proposition}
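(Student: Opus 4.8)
The plan is to recognize $(S_t,X_t)$ as a piecewise-deterministic Markov process on the state space $\mathbb{N}^{\ast}\times\R^{\mathcal{E}\times\mathcal{E}}$ and to verify that the three data determining such a process depend on the path only through the current state: the jump intensities, the post-jump reset maps, and the deterministic flow between jumps. Markovianity then follows either from the standard theory of piecewise-deterministic Markov processes, or equivalently by checking that the infinitesimal generator, applied to smooth test functions $F(S,X)$, returns an expression that is a function of $(S_t,X_t)$ alone. First I would fix this state space and record the two easy ingredients.

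The intensity and the reset maps are immediate from \eqref{model} and the definition of $X^{e,e'}_t$. The $e$-th intensity reads $\lambda^e_t=f^e(S_{t-})\big(\mu^e+\sum_{e'\in\mathcal{E}}X^{e,e'}_{t-}\big)$, which is manifestly a deterministic function of $(S_{t-},X_{t-})$. When an event of type $e''\in\mathcal{E}$ occurs at time $t$, the only counter that jumps is $S^{e''}$, incremented by one; hence $S_t=S_{t-}+e''$, while each coordinate is reset by the deterministic amount $\Delta X^{e,e'}_t=\mathbf{1}_{\{e'=e''\}}\,\phi^{e,e''}(0)=\mathbf{1}_{\{e'=e''\}}\sum_{l=1}^{L}\alpha_l^{e,e''}\beta_l$, again a function of the current state and of the jump label alone.

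The crux—and the step I expect to be the main obstacle—is the third ingredient: the evolution of $X_t$ between two consecutive jumps must close as an autonomous ODE in the $X$-coordinates. Differentiating $X^{e,e'}_t=\sum_{l=1}^{L}\alpha_l^{e,e'}\beta_l\int_0^t e^{-\beta_l(t-s)}dS^{e'}_s$ away from jump times gives $\frac{d}{dt}X^{e,e'}_t=-\sum_{l=1}^{L}\beta_l\big(\alpha_l^{e,e'}\beta_l\big)\int_0^t e^{-\beta_l(t-s)}dS^{e'}_s$. This drift is a $\beta_l$-reweighted combination of the very mode integrals that build the family $X^{\cdot,e'}$, so it is \emph{not} simply proportional to $X^{e,e'}$ once $L>1$; the work is to show it is a linear function of the whole family $\{X^{\tilde e,e'}\}_{\tilde e\in\mathcal{E}}$. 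This is exactly where the sum-of-exponentials structure with decay rates $\beta_l$ shared across all kernels is used: for each source $e'$ the coefficient vectors $(\alpha_l^{\tilde e,e'})_l$ carry the full mode information, so the reweighted combination can be re-expressed through the current values $\{X^{\tilde e,e'}\}_{\tilde e}$, turning the flow into an autonomous linear system $\dot X=M(X)$ with $M$ independent of the past. I would isolate this linear-algebra identity and treat it as the heart of the argument.

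With the three ingredients in hand, I would assemble the generator
\[
\mathcal{A}F(S,X)=\sum_{e,e'}M^{e,e'}(X)\,\partial_{X^{e,e'}}F+\sum_{e\in\mathcal{E}}f^e(S)\big(\mu^e+\textstyle\sum_{e'}X^{e,e'}\big)\big[F(S+e,X+\Delta^e)-F(S,X)\big],
\]
where $\Delta^e$ encodes the reset of the second paragraph, and conclude that $(S_t,X_t)$ is Markov because $\mathcal{A}F$ depends on the trajectory only through $(S_t,X_t)$. The only delicate points to make fully rigorous are the closure of the flow and the regularity ensuring that between jumps the path is the unique solution of $\dot X=M(X)$; the intensity and reset parts are routine once $X_t$ itself is taken as the state.
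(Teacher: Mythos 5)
Your PDMP framing is the right one, and your first two ingredients are correct: the intensity $\lambda^e_t=f^e(S_{t-})\bigl(\mu^e+\sum_{e'}X^{e,e'}_{t-}\bigr)$ is a function of the state, and the resets $\Delta X^{e,e'}=1_{\{e'=e''\}}\,\phi^{e,e''}(0)$ are deterministic functions of the jump label. But the step you yourself flag as the heart of the argument --- closure of the inter-jump flow when $L>1$ --- is asserted rather than proved, and the asserted linear-algebra identity is false in general. Write $I^{e'}_l(t)=\int_0^t e^{-\beta_l(t-s)}dS^{e'}_s$ for the raw mode integrals; then $X^{\tilde e,e'}=\sum_l \alpha_l^{\tilde e,e'}\beta_l\, I^{e'}_l$, and the drift you computed is $-\sum_l \beta_l^2\alpha_l^{e,e'} I^{e'}_l$. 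For this drift to be a function of the aggregated state you would need the vector $(\beta_l^2\alpha_l^{e,e'})_{l\le L}$ to lie in the span of the $2K$ vectors $(\beta_l\alpha_l^{\tilde e,e'})_{l\le L}$, $\tilde e\in\mathcal{E}$ --- a rank condition the model never assumes, and which fails generically as soon as $L>2K$ (the paper's own calibration has $L=6$ while $2K\le 4$). Indeed the aggregated pair $(S_t,X_t)$ is simply not Markov in general for $L>1$: already for a one-dimensional Hawkes process with $\phi=\alpha_1\beta_1e^{-\beta_1 t}+\alpha_2\beta_2e^{-\beta_2 t}$, the current value of $X=Y_1+Y_2$ does not determine how the mass is split between the two modes, hence does not determine the law of the future.

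The fix --- and what the paper implicitly does (its Appendix A restates the proposition only for $L=1$, where $dX_{lm}=-\beta X_{lm}\,dt+\alpha_{lm}\,dN^m_t$ closes trivially) --- is to enlarge the state rather than to seek an identity on the aggregated variables: take one component per exponential mode, $X^{e,e'}_l(t):=\alpha_l^{e,e'}\beta_l\int_0^t e^{-\beta_l(t-s)}dS^{e'}_s$. Between jumps each component satisfies the autonomous linear ODE $\dot X^{e,e'}_l=-\beta_l X^{e,e'}_l$, the reset at an event of type $e'$ is the constant $\alpha_l^{e,e'}\beta_l$, and the intensity is recovered through $X^{e,e'}=\sum_l X^{e,e'}_l$; your PDMP/generator assembly then goes through verbatim with state $\bigl(S,(X^{e,e'}_l)_{l,e,e'}\bigr)$. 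So the proposition must be read with $X$ carrying the per-mode components --- for $L=1$, where the two readings coincide, your argument is complete and matches the paper's --- whereas with the aggregated $X$ and $L>1$, the plan as you set it up cannot be completed.
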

In the simple case where $K = 1$ (i.e., only jumps of size +1 or -1 are allowed) and $L = 1$ (i.e., only one exponential kernel), one can prove (see Annex \ref{annex:ergo}) the following  ergodic property :  
\begin{proposition}
\label{thm_ergodicity}
Assume $K = 1$, i.e., ${\cal E} = \{-1,1\}$ and $L=1$, i.e,  $\phi_{e,e'}(t) = \alpha^{e,e'}\beta e^{-\beta t}$,\\
If the  following conditions are satisfied : 
\begin{empheq}[left={(A)=}\empheqlbrace]{align}
	&f^-(1) = 0 \tag{{A\textsubscript{1}}}\\
	&f^{-}(S) \geq \gamma S \text{ for some $\gamma > 0$ when $S\geq2$} \tag{{A\textsubscript{2}}}\\
	&\sup_{S\geq 1}\{f^+(S)\}(\alpha^{+,-}+\alpha^{+,+}) < 1 \tag{{A\textsubscript{3}}}
\end{empheq}	
then the process $(S_t,X_t)$ is a V-uniformly ergodic Markov process.
\end{proposition}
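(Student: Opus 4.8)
The plan is to apply the Foster--Lyapunov theory for continuous-time Markov processes (in the Down--Meyn--Tweedie form): I will exhibit a norm-like Lyapunov function $V\geq 1$ and a compact set $C$ such that the extended generator satisfies a geometric drift inequality $\mathcal A V\leq -\kappa\,V+d\,\mathbf 1_{C}$, and then check that the process is $\psi$-irreducible, aperiodic, and that $C$ is petite; $V$-uniform ergodicity follows at once. First I would make the state space explicit. Since $L=1$ and $\phi^{e,e'}(t)=\alpha^{e,e'}\beta e^{-\beta t}$, the memory variables $X^{e,e'}_t$ all reduce to two scalars $Y^{e'}_t:=\beta\int_0^t e^{-\beta(t-s)}\,dS^{e'}_s$ ($e'\in\{+,-\}$), via $X^{e,e'}_t=\alpha^{e,e'}Y^{e'}_t$. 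The triple $(S_t,Y^+_t,Y^-_t)$ lives on $\{1,2,\dots\}\times\mathbb R_{\geq 0}^2$ (condition (A\textsubscript{1}), $f^-(1)=0$, guarantees $S_t\geq 1$ is preserved) and is a piecewise-deterministic Markov process: between jumps $\dot S=0$ and $\dot Y^{e'}=-\beta Y^{e'}$; a $+$ jump sends $(S,Y^+,Y^-)\mapsto(S+1,Y^++\beta,Y^-)$ at rate $\lambda^+=f^+(S)\big(\mu^++\alpha^{+,+}Y^++\alpha^{+,-}Y^-\big)$, while a $-$ jump sends $(S,Y^+,Y^-)\mapsto(S-1,Y^+,Y^-+\beta)$ at rate $\lambda^-=f^-(S)\big(\mu^-+\alpha^{-,+}Y^++\alpha^{-,-}Y^-\big)$. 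Its generator reads
\begin{equation*}
\mathcal A V=-\beta Y^+\partial_{Y^+}V-\beta Y^-\partial_{Y^-}V+\lambda^+\big[V(S+1,Y^++\beta,Y^-)-V\big]+\lambda^-\big[V(S-1,Y^+,Y^-+\beta)-V\big].
\end{equation*}

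The heart of the argument is the choice $V(S,Y^+,Y^-)=\exp\!\big(aS+bY^++cY^-\big)$ with small positive constants $a,b,c$ to be tuned. A direct computation gives $\mathcal A V=V\,\Psi$ with
\begin{equation*}
\Psi=-\beta(bY^++cY^-)+\lambda^+\big(e^{a+b\beta}-1\big)+\lambda^-\big(e^{-a+c\beta}-1\big),
\end{equation*}
and I would regroup $\Psi$ into its $Y$-free part and the coefficients of $Y^+$ and of $Y^-$. Imposing $c\beta<a$ makes $e^{-a+c\beta}-1<0$, so every term carrying the factor $f^-(S)$ (all with nonnegative $\alpha^{-,\cdot}$) is $\leq 0$; since $f^-(S)\geq\gamma S$ by (A\textsubscript{2}) and $\mu^->0$, the $Y$-free part tends to $-\infty$ as $S\to\infty$, which controls the $S$-direction. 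For the $Y$-directions the coefficients are bounded above by $-\beta b+\sup_S f^+(S)\,\alpha^{+,+}(e^{a+b\beta}-1)$ and $-\beta c+\sup_S f^+(S)\,\alpha^{+,-}(e^{a+b\beta}-1)$. Writing $p=\sup_S f^+(S)\,\alpha^{+,+}$ and $q=\sup_S f^+(S)\,\alpha^{+,-}$ (finite by the normalization of $f^+$), a short computation — scaling $b=\rho a$, $c=\sigma a$ and letting $a\to 0$ — shows these two coefficients can be made simultaneously negative while keeping $c\beta<a$ precisely when $\tfrac{p}{1-p}<\tfrac{1-q}{q}$, i.e. iff $p+q<1$, which is exactly condition (A\textsubscript{3}). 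Fixing such $a,b,c$ (small enough that the strict inequalities persist) yields $\Psi\to-\infty$ as $\|(S,Y^+,Y^-)\|\to\infty$, hence $\mathcal A V\leq -\kappa V+d\,\mathbf 1_C$ on a large compact set $C$.

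It remains to verify the qualitative hypotheses of the ergodic theorem, and this is where I expect the main difficulty. The process is a degenerate PDMP — the flow is deterministic and each jump moves $Y^+$ or $Y^-$ only in the single fixed direction $+\beta$ — so $\psi$-irreducibility and the ``every compact set is petite'' property are not automatic. I would establish them by a controllability/reachability argument: starting from an arbitrary $(S,Y^+,Y^-)$, one shows that, with positive probability, the chain can be steered into any neighbourhood of a fixed regeneration region near the reflecting boundary $S=1$ (where $\lambda^-=0$, so $Y^-$ cannot be replenished and both memories relax deterministically toward $0$). The randomness of the inter-jump times, together with this boundary, should allow one to build an explicit lower bound — a minorization / Doeblin-type condition — on the transition kernel over compacts, which simultaneously gives petiteness, irreducibility, and aperiodicity.

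Once the drift inequality above together with $\psi$-irreducibility, aperiodicity, and petiteness of $C$ are in hand, the Down--Meyn--Tweedie theorem yields $V$-uniform (geometric) ergodicity of $(S_t,X_t)$, completing the proof. The conceptual crux is the exponent balancing in the Lyapunov function, where (A\textsubscript{2}) tames the $S$-direction and (A\textsubscript{3}) is used in its sharp form $p+q<1$; the most technically delicate step is the reachability/minorization needed to turn compacts into petite sets.
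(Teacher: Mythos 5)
Your proposal is correct and reaches the required drift inequality by a genuinely different construction than the paper, although both arguments ultimately rest on the same ergodic theorem (the drift criterion for $\psi$-irreducible, aperiodic Markov processes, Theorem 5.2 of \cite{down1995exponential}). The paper keeps the full state $(X,S)$ and uses the \emph{linear} Lyapunov function $V(X,S)=\sum_{l,m}\eta_{lm}X_{lm}+\eta S$, tuning five positive coefficients subject to the three linear constraints $(\mathrm{H}_1)$--$(\mathrm{H}_3)$, which it then solves explicitly in \eqref{eta_0} using $\alpha^{+,+}+\alpha^{+,-}<1$. You instead first reduce the state space---observing that for $L=1$ the four memory variables are proportional to two scalars, $X^{e,e'}=\alpha^{e,e'}Y^{e'}$, a simplification the paper does not exploit---and take the \emph{exponential} function $V=\exp(aS+bY^++cY^-)$. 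The balancing is exactly parallel: your condition $c\beta<a$ makes every term carrying $f^-(S)$ nonpositive so that $(A_2)$ can drive the $S$-direction (this is precisely the role of the paper's $(\mathrm{H}_1)$), and your small-$a$ computation showing that the $Y^+,Y^-$ coefficients can be made simultaneously negative iff $p+q<1$ is exactly where $(A_3)$ enters, in the same sharp form as in the paper's solution of $(\mathcal H)$; I checked this computation and it is sound. What each choice buys: your exponential $V$ is automatically coercive and $\geq 1$ (the paper's linear $V$ needs a harmless adjustment to literally satisfy $V\geq 1$) and yields convergence in a stronger norm, i.e.\ exponential moments of $(S,Y^\pm)$; its price is that you must verify that such a $V$ lies in the domain of the extended generator of a PDMP with unbounded state-dependent jump rates---routine here, since each jump multiplies $V$ by one of the bounded factors $e^{a+b\beta}$, $e^{-a+c\beta}$, but it deserves a line. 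The paper's linear $V$ keeps every step elementary and avoids that discussion.

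Two caveats. First, your parenthetical claim that $\sup_S f^+(S)$ is ``finite by the normalization of $f^+$'' is inaccurate: the normalization only fixes the first nonzero value of $f^+$ at $1$ and does not bound it; finiteness actually follows from $(A_3)$ itself (whenever $\alpha^{+,+}+\alpha^{+,-}>0$). The paper needs the same bound (it uses $f^+\leq 1$ in bounding its term $I_3$, justified by its remark interpreting $(A_3)$), so this is a shared, easily repaired point rather than a flaw specific to your argument. Second, you rightly flag $\psi$-irreducibility, aperiodicity and petiteness of compact sets as the remaining qualitative hypotheses and only sketch the reachability/minorization argument near the boundary $S=1$; note that the paper does not verify these hypotheses at all---it invokes the theorem assuming them---so your treatment, while incomplete there, is no less complete than the paper's own proof and is more explicit about what is missing.
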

Thus, under these assumptions, the spread process has some stationary distributions.  Let us notice that condition (A\textsubscript{2}) is a condition ensuring a "return to the mean value" analog to the condition of a proportional cancellation rate in \cite{smith2003statistical,abergel2013mathematical, wu2019queue}. % \imp{(citer aussi Wu Rambaldi Bacry Muzy ???)}. 
The condition (A\textsubscript{3}) corresponds to a stability condition ensuring that the number of upward spread jumps does diverge exponentially in a finite interval.
This is a first result, the general proof (for any $K$ and $L$) seems to be much more difficult (see Remark \ref{ergo2}), it will be the focus of a forthcoming paper.

\section{Numerical simulation of the SDSH spread model and parametric estimation}\label{sec:simu_estim}

\subsection{Simulation}
\label{sec:simulation}
In order to perform numerical simulations, we use the classical "thinning method" introduced by \cite{lewis1979simulation, ogata1981lewis} and implemented in the {\em tick} open source library \cite{bacry2017tick}.

To make it easier for readers to understand how our model works in practice, Figure \ref{fig:model_intensity} shows the result of such a simulation during the first 20 seconds. We chose the following parameters 
\begin{itemize}
    \item $K=1$, i.e., $\mathcal{E}=\{+1,-1\}=:\{+,-\}$
    \item $L=1$ and $\phi^{e,e'}(t) = \alpha^{e,e'}e^{-\beta t}$, where $\beta = 1$, $\alpha^{+,+} = \alpha^{-,-} = 0.1$ and  $\alpha^{+,-} = \alpha^{-,+} = 0.2$
    \item $\mu^+ = \mu^- = 0.3$
    \item $f^+(1)=1, f^+(2)=0.7$ and  $f^+(S)=0.3$ for $S\geq 3$,
    \item $f^-(1)=0, f^-(2)=1$, and $f^-(S)=5$ for $S\geq 3$.
\end{itemize}

% We see, in this simulation, that every time the moment where the spread size stays at 3 is very brief due to the strong force $f^-(3)=5$. 

\begin{figure}[h]
     \centering
    \includegraphics[width=0.9\textwidth]{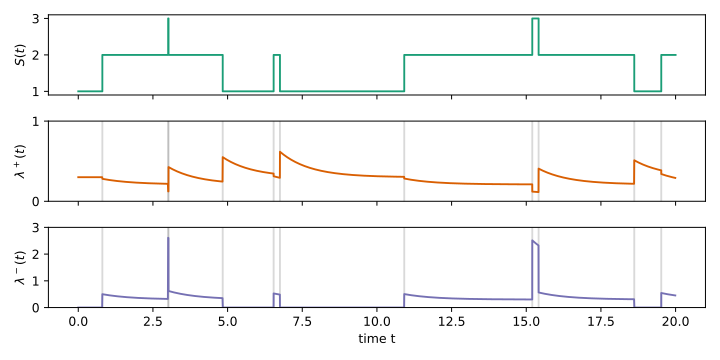}
    \caption{A realization of the SDSH spread model during 20 seconds using parameters as indicated in the text. As we see, since we chose $f^{-}(S)$ to be very large (5) as soon as $S\ge 3$, the spread does not stay long at a value of 3.}
    \label{fig:model_intensity}
\end{figure}

\subsection{Estimation principles}  
\label{sec:estimation}
For parametric estimation, we will use again the {\em tick} python open source library which natively allows robust parametric Maximum Likelihood Estimation (MLE) estimation of multi-dimensional Hawkes processes using sum of exponentials. We slightly tweaked the algorithms of this library in order to take into account the $f^e(S)$ terms. The algorithm used in this model closely follows the one used for the QRH-II model described in \cite{wu2019queue}. The likelihood function for this model is located in Appendix \ref{mle}.

% \imp{(citer aussi Wu Rambaldi Bacry Muzy pour un méthode similaire. Donner un peu plus de détails sur la méthode)}.

More precisely, prior to the parametric simulation, a certain number of hyper-parameters have to be fixed (according to the specificity of the corresponding assets, see Section \ref{setting} for specific cases) : 
\begin{itemize}
    \item $K$ : the highest jump size allowed by the model. This hyper-parameter should be carefully chosen after looking at the effective probability of occurrence of various jump sizes on the real data,
    \item $L$ and $\{\beta_l\}_{l \in L}$ : in practice choosing the $\beta_l$ so that they are logarithmically spaced is sufficient to recover a very large variety of behavior, i.e., $\beta_l = \beta_1 10^{l-1}$. $L$ and $\beta_1$ should be chosen to adapt to the timescale range one is interested in. 
    \item each $f^{e}(s)$ is a function of $s \in \mathbb{N}^*$, thus corresponds to an infinite number of parameters. For parametric estimation to be feasible, we have somehow to make some assumptions on the behavior of the  $f^{e}(s)$ functions for large $s$. Very large spread are extremely rare events, so basically any assumption will work, as long as it enforces mean reversion of the spread. Thus, we arbitrarily chose to consider that all the functions $f^e(s)$ are constant for $s$ greater that a fixed value $\bar S$. In practice, the value of $\bar S$ should be chosen so that spread of size greater than $\bar S$ should be extremely rare events. 
\end{itemize}
Given these hyper-parameters, our MLE parametric estimation algorithm allows to estimate 
\begin{itemize}
    \item the exogenous intensities $\{\mu^e\}_{e \in {\cal E}}$ (so a total of $2K$ parameters)
    \item the kernel parameters $\{\alpha_l^{e,e'}\}_{l \in L, (e,e') \in {\cal E}^2}$ (so a total of $4LK^2$ parameters)
    \item the values $\{f^e(s)\}_{e \in {\cal E}, s \in [1..\bar S]}$ (so a total of ($2K\bar S - \frac{K^2+K}2 -2K$) parameters, as long as $K<\bar S$ (let us remind that we fixed arbitrarily the first non zero value of $f^e(s)$ to be 1.)
\end{itemize}
So that amounts to $2K+4LK^2+2K\bar S - \frac{K^2+K}2 - 2K$ parameters.

In the following we illustrate the estimation procedure on simulated data.

\subsection{Estimation on simulated data}  
In this example, we simulated our model in dimension 2, i.e., for 
$\E = \set{+1,-1} := \set{+,-}$ : the only possible moves for the spread are upward or downward moves of one tick. 
Simulation (using the thinning method \cite{lewis1979simulation,ogata1981lewis} as explained in Section \ref{sec:simulation}) was run to produce 50 samples of size 5000 seconds using the following parameters.
We set the parameters for the simulation to be (using the notations introduced in Section \ref{sec:model}) : 
\begin{itemize}
    \item $\mu^+ = 0.3, \mu^- = 0.2$
    \item $L = 2$ with $\beta_1 = 20s^{-1}, \beta_2=200s^{-1}$ and the Hawkes kernel is 
    $$\phi(t) = \begin{pmatrix}
    2 & 6\\
    10 & 0
    \end{pmatrix}e^{-20t}+
    \begin{pmatrix}
    4 & 20\\
    20 & 4
    \end{pmatrix}e^{-200t}$$
    \item $f^+(1)=1, f^+(2) = 0.8, f^+(3)=0.5, f^+(s)=0.2$ for $s\geq 4$\\
    $f^-(1)=0, f^-(2) = 1, f^-(3)=2, f^-(s)=3$ for $s\geq 4$
\end{itemize}
\textbf{Estimation}
Then we estimate the parameters using $\beta_1 = 10s^{-1}, \beta_2=100s^{-1}, \beta_3=1000 s^{-1}$, i.e.
$$\phi(t) = 10\begin{pmatrix}
    \alpha_1^{++} & \alpha_1^{-+}\\
    \alpha_1^{+-} & \alpha_1^{--}
    \end{pmatrix}e^{-10t}+
    100\begin{pmatrix}
    \alpha_2^{++} & \alpha_2^{-+}\\
    \alpha_2^{+-} & \alpha_2^{--}
    \end{pmatrix}e^{-100t}+
    1000\begin{pmatrix}
    \alpha_3^{++} & \alpha_3^{-+}\\
    \alpha_3^{+-} & \alpha_3^{--}
    \end{pmatrix}e^{-1000t}$$

\begin{figure}[ht]
    \centering
    \includegraphics[width=\textwidth]{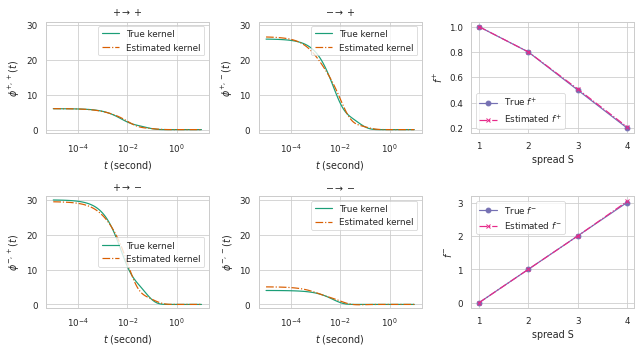}
    \caption{Numerical estimation from samples of a SDSH spread model. The 4 figures on the left show the true kernels and the estimated kernels, and the 2 figures on the right show the true $f$ values and the estimated $f$ values. The estimated $\mu^+$ is 0.29 and the estimated $\mu^-$ is 0.19.}
    \label{fig:test1}
\end{figure}

As it can be seen in Fig. \ref{fig:test1}, our estimation method allows one to recover the parameter values $\mu^e$ and to reproduce very well both the functions $f^e(S)$ and the Hawkes kernel shapes. It is worth noting that the basis exponential functions for the simulation are very different from those for estimation.
%------------------------
\section{Empirical results}\label{sec:res}
\label{setting}
\subsection{Data}
In this section,  we calibrate the SDSH spread model using high-frequency data from the Cac40 french Euronext Market. The data correspond to every single change of the spread (characterized by the size of the change and a time stamp with a precision of $1\mu s$) for 3 stocks, namely AXA, BNP and NOKIA and for the CAC40 index Future.  The data for stocks (resp. Cac40 Future) are extracted from February 1st 2017 (resp. January 4th 2016) to February 28th 2018 (resp. February 28th 2017). 
In order to minimize the intraday seasonality of the data we used only the data in the intraday slot [10am,12am]. Thus, for a given asset, the estimation of the model will be performed considering each day as an independent realization. Thus, in order to avoid days with too short time series, we took out, for a given asset, all the days associated with a number of events (i.e., spread changes) below a threshold. 
We refer the reader to the Table \ref{stats_data} for getting some basic information on each of these data series. Let us point out that the Cac40 Future has a much larger tick size than the 3 other assets, we expect the jump size for the spread to be much smaller on average. 
\begin{table}[!h]
\centering
\setlength\tabcolsep{1.5pt}

\begin{tabular}{C{30mm}C{23mm}C{23mm}C{20mm}C{26mm}C{20mm}C{15mm}}
\hline
Asset & tick size & Min. \#events &  \#days &   Total \#events & $\esps{event}{S}$ & $\esps{cal}{S}$\\
\hline
 CAC40 Future  &  0.5 & 5,000 &  100 &  1,026,156& 1.51 & 1.40\\
 AXA  & 0.005 & 3,000 &  130 &  617,309 & 2.44 & 3.04\\
 BNP  & 0.01 & 5,500 &  100 &  1,079,464 & 2.54 & 3.17\\
%  Danone  & 2,000 & 100 &  286,268 \\
 NOKIA  & 0.001 &2,000 & 108 &  570,754 & 3.52 & 4.43\\
\hline
\end{tabular}
\caption{Characteristics of the data used for model estimation. For each asset, for each day we only consider the slot [10am,12am]. Moreover, we keep only days where the number of events (i.e., the number of times the spread changes) is above the  'Min. \#events' number. $\esps{event}{S}$ and $\esps{cal}{S}$ are the expectations of spread with two different distributions : event time distribution and calendar time distribution. See Eq. \eqref{eq:event-dist} and \eqref{eq:calendar-dist} for their definitions. }
\label{stats_data}
\end{table}
\subsection{Spread distribution and hyper-parameter settings}
\label{sec:hypersettings}
In order to choose the the value for hyper parameters, following the lines of Section \ref{sec:estimation}, we have to study the empirical distribution of the spread.
Inspired by \cite{bouchaud2009markets}, we can consider two ways to measure the spread distribution. They are both constructed by an average over daily distributions which can be obtained in two ways :  
\begin{itemize}
\item Calendar time daily distribution : 
\begin{equation}
    \label{eq:calendar-dist}
\mathbb{P}_{cal}(spread=S) = \frac{1}{T}\int_0^T 1_{S_u=S}du
\end{equation}
(where $[0,T]$ corresponds to the slot 10am-12am) 
\item Event time daily distribution  : 
\begin{equation}\label{eq:event-dist}
\mathbb{P}_{event}(spread=S) = \frac{1}{N}\sum_{n=1}^N 1_{[S_{t_n-}=S]}
\end{equation}
(where $N$ is the total number of events on a given day). 
\end{itemize}
The solid lines in Figure \ref{fig:spread_dist} illustrates the so-obtained results. Let us point out that it also displays (the right column) the distribution of the variation of the spread, i.e., $dS$.

\paragraph{Choice of K.} As expected, since the tick of the Cac40 Future is much larger than the other assets, we expect the values of the spread variations to be mainly of 1 tick. One can see that this is indeed that case in the last figure \ref{fig:future_spread_dist}: the amplitude of spread variations $dS$ is never larger than 1 tick.  Consequently, it seems  natural to choose $K=1$ for this asset. If we follow the same guidelines (i.e., choosing the value of $K$ as the minimum value so that the events corresponding to a change of the spread of $K$ ticks is very rare) we see that choosing $K=2$ for all the other assets seems reasonable (see Figure \ref{fig:axa_spread_dist}, \ref{fig:bnp_spread_dist}, \ref{fig:nokia_spread_dist}). 

\paragraph{Choice of $\bar S$.} As explained in Section \ref{sec:estimation}, choosing arbitrarily $\bar{S}$ as the minimum value for which the probability of the spread to be this value is not extremely close to zero seems reasonable (there is no way we have enough information to perform reliable estimation of $f^e(S)$ for values of $S$ which hardly never occur). 
More precisely, we chose $\bar{S}$ to be the maximum $s$ for which $\mathbb{P}_{event}(Spread=s) \geq 1\%$.
This leads to the following values : $\bar S = 5$ (AXA), $\bar S = 5$ (BNP), $\bar S = 8$ (Nokia) and  $\bar S = 2$ (CAC40 Future).

\paragraph{Choice of $L$.} As we will see (see Section \ref{sec:hawkes_kernel}), choosing $L=6$ and $\beta_1 = 10^{-1}s^{-1}$ for all the assets (consequently, following Section \ref{sec:estimation} $\beta_2 = 1s^{-1}$, $\beta_3 = 10s^{-1}$, $\beta_4 = 10^2s^{-1}$, $\beta_5 = 10^3s^{-1}$, $\beta_6 = 10^4s^{-1}$) is sufficient to capture the kernels dynamics on the time-scale $[10^{-4}s,10s]$. Let us point out that we performed estimation with largest values of $L$ and smaller values of $\beta_1$, it does note change significantly the results (while it increases significantly the duration of the estimation procedure or alternatively leads to unstable results). 

% In order to solve this problem, we can (i) parameterize $f^e(S), e\in\mathcal{E}$, for example $f^e(S) = a^eS+b^e, e\in\mathcal{E}$ and now the estimation of $f^e(S)$ is replaced by the estimation of $a^e$ and $b^e$, or (ii) suppose $f^e(S)=f^e(\min\{S, \bar{S}\})$, $\bar{S}\in\mathbf{N}^*$ is a constant, in this case we only need to estimate $f^e(S)$ for $S\leq \bar{S}$. In this work, as all assets have the ranges of spread values not too big (i.e. we can find a $\bar{S}$ relatively small), we choose the second method. 

% \begin{itemize}
%     \item $K$ : See Figure \ref{fig:stats}. As the jumps of spread are almost always 1 or 2 (for CAC40 Future, the jumps are even almost always 1). Therefore it is logical to choose $K=2$ for the 3 stocks and $K=1$ for CAC40 Future. Obviously, the values of $dS$ are always restricted by the spread values.
%     \item $\bar{S}$ : The value $\bar{S}$ is the maximum $S$ for which we can estimate $f(S)$ with enough confidence. In practice, $\bar{S}$ is chosen such that $\mathbb{P}_{event}(Spread=S) > 0.5\%$ for $S\leq\bar{S}$. For the simulation part, $f$ stays constant for values of rare spreads (\ie $S>\bar{S}$). It is worth noting that in the sub-figures, the curves are symmetric with respect to the y-axis and without value at $dS=0$. 
%     \item $L$ : we fix $L=6$ and take $\beta_l = \frac{1}{\tau_l}$, where $\tau_l = 10^{l-5} s$ for $l\in\{1,2,3,...,6\}$.
% \end{itemize}
Table \ref{tab:hyperparam} summarizes all the choices for the hyper parameters
\begin{table}[!h]
\centering
\begin{tabular}{lccccc}
\hline
Asset  & $K$ & $\bar{S}$ & Kernel time scale & \# parameters \\
\hline
 AXA   & $2$ &  5 & $10^{-4}s\to 10^1s ~(L=6)$&  113\\
 BNP   & $2$ &  5 &  $10^{-4}s\to 10^1s ~(L=6)$  & 113\\
 NOKIA  & $2$ & 8 &  $10^{-4}s\to 10^1s ~(L=6)$  & 125\\
 CAC40 Future   &  $1$ &  2 & $10^{-4}s\to 10^1s ~ (L=6)$ &  27\\
\hline
\end{tabular}
\caption{Values chosen for the hyper parameters (following the guidelines in Section \ref{sec:estimation}) for each dataset corresponding to each asset. $K$ corresponds to the maximum jump size for the spread. $\bar S$ to the abscissa above which the functions $f^e(s)$ are considered as constant and the Kernel time scales is deduced from the choice of $L$ and $\beta_1$}
\label{tab:hyperparam}
\end{table}

%------------------------
\subsection{Estimation}
We performed parameters estimation on each time-series described in the previous Section using MLE following the guidelines of Section \ref{sec:estimation}. As explained in this Section, all the estimations in this paper were performed using  the {\em Tick} open-source package \cite{bacry2017tick}, after having adapted the MLE exponential-kernel estimation algorithm of {\em Tick} to account the state dependent function $f^e(S)$. In the following we will show the various results of this estimation and comment on them, starting with the estimation of the $\{f^e(s)\}_{e \in {\cal E}}$ functions and  then of the kernels $\{\phi^{e,e'}(t)\}_{e,e' \in {\cal E}}$ themselves. 

\subsubsection{Estimation of the $\{f^e(s)\}_{e \in {\cal E}}$ functions}
The results of the estimation procedures for the different assets are reported on Fig. \ref{fig:f} (let us recall that, for each $e$, the first positive value of $f^e(S)$ is arbitrarily set to 1). 

We first remark that  all estimated curves display the same behavior for all assets.
As expected, we see that the the $f^e$ functions corresponding to positive events (i.e., upward jumps, $e \in \{+1,+2\}$) are decreasing functions that decrease pretty quickly towards a value close to 0 whereas the $f^e$ functions corresponding to negative events (i.e., downward jumps, $e \in \{-1,-2\}$) are rapidly increasing functions.
When the spread is high, it is clearly pressed downward by inhibiting the positive events  (with small $f^{+1}(s)$ and $f^{+2}(s)$ for large spread), and by exciting the negative events  (with large $f^{-1}(s)$ and $f^{-2}(s)$ values).

The two functions $f^{-1}$ and $f^{-2}$ are clearly saturating though saturation seems to appear faster on $f^{-1}$ than on $f^{-2}$. 
Let us recall that one cannot perform estimation for larger spread than the chosen $\bar S$ due to the lack of statistics (i.e., spread values above the chosen $\bar S$ are extremely rare events). 

Let us remark that, as expected, the so-obtained range for the values of the function corresponding to negative events is in line with the corresponding tick size (and the average spread size) of each asset (see Table \ref{stats_data}). Indeed, the largest the tick size (the smallest the average spread) the smallest the range of values : going from CAC40 Future (being the largest tick size) to NOKIA (being the smallest tick size).

\begin{figure}[h]
     \centering
     \includegraphics[width=\textwidth]{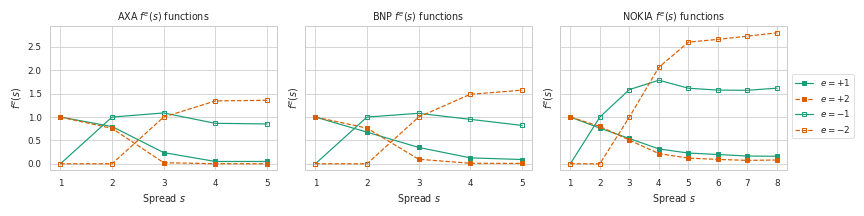}
     \caption{Estimations of the $\{f^e (s)\}_{e \in {\cal E}}$ functions for AXA (left), BNP (middle), NOKIA (right), ${\cal E} = \{-2,-1,+1,+2\}$. As for CAC40 Future, the $\Bar{S}$ is 2 and the $K$ is 1, therefore $f^{-1}(1) = 0, f^{-1}(s) = 1 \text{ if } s\geq 2$, $f^{+1}(1) = 1, f^{+1}(s) = 0.0104 \text{ if } s\geq 2$.
     For each $e$, the first positive value of $f^e(S)$ is arbitrarily set to 1.}
     \label{fig:f}
\end{figure}

\subsubsection{Estimation of Hawkes kernels $\{\phi^{e,e'}(t)\}_{e,e' \in {\cal E}}$} 
\label{sec:hawkes_kernel}
% \imp{Peut être mentionner où se trouvent, dans quelle figure, les résultats dont on va parler ? \\}
First, let us point out that our estimation procedure essentially leads to  positive valued kernels. Let us recall that the MLE procedure described in Section \ref{sec:estimation} is able to reveal inhibition behavior (i.e., significantly negative valued kernels) when present in the signal. Actually, if one zooms some of the kernels, one would reveal some negative values for a few kernels, but their absolute values are very small and not significant.
Of course, this does not mean that there is no inhibition behavior in the spread counting process. Actually the inhibition behavior are extremely strong but they are taken care, for each component $S^e_t$, 
by the multiplicative term $f^{e}(S_{t^-})$. Recall, this is the main reason why we initially introduced them in our model ("hard" inhibition of some components to prevent negative spread values). 

\vskip .3cm 
{\bf Comparison of kernel integrated quantities} 

Let emphasize that, the introduction of multiplicative terms in the ``State Dependent" Hawkes model (i.e., the fact that our model is not a plain vanilla Hawkes model due to the role of the state variable $S_t$) keeps us from regular interpretations of the $L^1$ norm  of the Hawkes kernels. Indeed, in a vanilla Hawkes model, one generally compare the different values of the $L^1$ norms $\{||\phi^{e,e'}(t)||_1\}_{e,e' \in {\cal E}}$ and uses the classical populationnal interpretation of a Hawkes model ($||\phi^{e,e'}(t)||_1$ represents the average number of events of type $e'$ "directly" generated by an event of type $e$) in order to disentangle the overall dynamics of the process. 
In our case, this interpretation is not possible. Not only it is conditioned to a spread value but the comparison between two norm values  $||\phi^{e_1,e'}(t)||_1$ and  $||\phi^{e_2,e'}(t)||_1$ for two different components $e_1 \neq e_2$ does not make sense.

However, one can still compare the influence of all the past events of a given type $e'$ on the occurrence of an event of a given type $e$.
Indeed, given a  spread jump of size $e$ at time $t_k$. Thus, the spread jumps from the value $s$ (at time $t_k^-$) to $s+e$ (at time $t_k^+$), where $s = S_{{t_k}^-}$.
Then one could study the influence of each endogenous term in the sum (within the brackets) in \eqref{model}, thus comparing the relative values $\int_0^{t_k^-} \phi^{e,e'}(t-s)dS_s^{e'}$ for different $e'$ in order to understand which event type $e'$ is influencing the most the occurrence at time $t_k$ of the jump of size $e$. For that purpose, let us introduce the quantity 
$$
I_{s, e, t_k}(e') :=\int_0^{t^*} \phi^{e,e'}(t^*-u)dN^{e'}_u,~~~\forall e'
$$
for a jump of size $e$ occurring at time $t_k$ and leading to a spread value of $s$. We can then define the corresponding  averaged values
$$
\bar I_{s, e}(e') := \cfrac{\sum_{t_k:dS_{t_k}=e, S_{t_k-}=s} I_{s, e, t_k}(e')}{\#\{t_k:dS_{t_k}=e, S_{t_k-}=s\}}.
$$
and finally the relative values 
\begin{equation}
\label{eq:tildeI}
\tilde I_{s, e}(e') = \cfrac{\bar I_{s, e}(e')}{sup_{e'}\bar I_{s, e}(e')}
\end{equation}

Figure \ref{fig:I} displays this quantity for AXA, for each value of $s$, as an image (for which the vertical axis is $e$ and the horizontal axis is $e'$)\footnote{Let us note that, since for CAC40, most of the time the spread is of size $S=1$ (the only other possible state is $S=2$ which happens very rarely), this type of analysis is not relevant}. Readers are invited to refer to Appendix \ref{more_num_res} Fig.\ref{fig:I} for additional figures of the remaining stocks.
These figures show what seems to be a universal feature for stocks : the occurrence of a jump of size $e$ is essentially triggered by past occurrences of contrariant jump of size $-e$.
\begin{figure}
    \centering
    \includegraphics[width=0.8\textwidth]{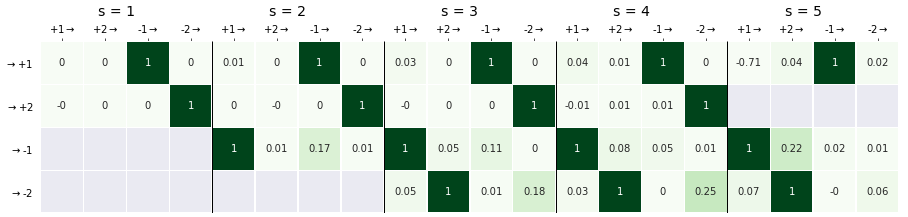}
    \caption{Relative kernel integrated quantities $\tilde I_{s, e}(e')$ for AXA. For each stock, for each value of $s$, an image is displayed showing $\tilde I_{s, e}(e')$ (defined by \eqref{eq:tildeI}) as a function of $e$ (vertical axis) and $e'$ (horizontal axis).}
    \label{fig:axa_I}
\end{figure}

\vskip .3cm
{\bf Comparison of the kernel shapes}

Finally, estimation results show that the most energetic kernels are decreasing "slowly", i.e., as a power-law $t^{-\beta}$ with an exponent $\beta \simeq 1$. Such a power-law shape of the cross-excitation kernels is not surprising since this behavior with similar exponent values have been observed by various authors when modelling the market activity \cite{BacryRev2015} or the dynamics of mid-price \cite{bacry2016estimation}.
Figure \ref{fig:axa_kern_contrariant} shows the contrariant kernels for the different stocks in a log-log plot. 
We see that they display a power-law behavior on 3 or 4 decades (depending on the kernel). (See Fig. \ref{fig:bnp_kern_contrariant}, \ref{fig:nokia_kern_contrariant}, \ref{fig:future_kern_contrariant} for the other assets in Appendix \ref{more_num_res}.)

Moreover, most of them display a very clear bump around the time $t \simeq 0.2ms$. This is not very surprising, this phenomenon has already been  revealed in several former works \cite{bacry2016estimation,rambaldi2017role}. It corresponds to an average value of the latency of the market itself, i.e., the average time for an agent to effectively place an order, reacting to a change of the order book. 
\begin{figure}[h]
    \centering
    \includegraphics[width=\textwidth]{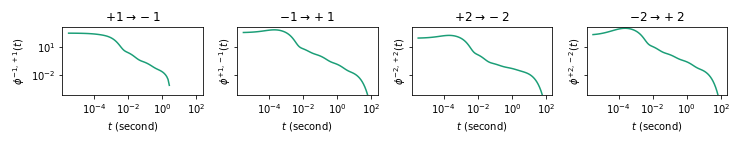}
    \caption{Hawkes kernels for AXA. In this figure we present four contrariant Hawkes kernels (i.e., with $e$ and $e'$ with different signs) . 
    Each kernel $\phi^{e,e'}$ (labeled $e' \rightarrow e$ on the figure with $e,e' \in {\cal E}$) represents the influence of the past jumps of size $e'$ on the occurrence probability of a future jumps of size $e$. Each kernel is represented by a sum of $L=6$ exponentials, i.e.,  $\phi^{e,e'}(t)$, where $\phi^{e,e'}(t)=\sum_l^L \alpha_l^{e,e'}\beta_l e^{-\beta_l t}$, where $\beta_l=\frac{1}{\tau_l}$ with $\tau_l$ taken in $\{10^{-4}s, 10^{-3}s, ..., 10^1s\}$
    All the kernels are displayed on a log-log scale and show a power law behavior on a large range of scales (3, 4 or even 5 decades)}
    \label{fig:axa_kern_contrariant}
\end{figure}
\subsection{Goodness-of-fit}
As we will show in this section, the model that we have built and estimated in the previous sections, is able to capture accurately very different statistical properties of the spread process.  
We will study successively,  the spread distribution itself (that has already been discussed in Section \ref{sec:hypersettings}), the inter-event-time distributions (i.e., time between change of spread values), the spread autocorrelation function and finally the auto-covariance function of the spread increment process.

% \imp{ !!! Dans cette section peut être faudrait mentionner d'autres modeèles (ex. Bouchaud) et comparer précisément ou pas aux ces modèles (ex. Modèle Long-range Poisson)}

\subsubsection{Spread Distributions}\label{spread_dist}
Let us first compare the spread distributions obtained on true data (both calendar time and event time) that have been already discussed in Section \ref{sec:hypersettings} and the ones obtained with data simulated by our model (fitted on true data). 

Following the same lines as in Section \ref{sec:hypersettings},
Fig.\ref{fig:axa_spread_dist} displays both calendar-time spread distribution, event-time distribution and spread jumps ($dS$) distribution for the AXA asset.  We see that the model fits very precisely all these distributions. 
% {\bf ?? on met les autres cours en annexe comme inter event time ? Peut etre devrait on faire ca systematiquement pour tous les resultats} 
Several works have explored the distribution of spread, but \cite{fosset2020endogenous} is one of the few that discuss it in detail. In their spread model (as shown in Equation \eqref{spread1}), the distribution of spread is geometric, given by : 
$$
\P(S\geq n) = \cfrac{1-\alpha_c}{1-\alpha}(1-r)r^{n-2}~~~~~~~~~~~\text{ for } n\geq 2
$$
where $\alpha_c = 1-\frac{\mu^+}{\mu^-}$ and $r\in (0,1)$ depends on $\alpha$ and $\beta$. However, since $\P(S=n)$ is a decreasing function for $n\geq 2$, the model Eq. \eqref{spread1} can only produce spread distributions that peak at $S=1$ or $S=2$. As shown in Figure \ref{fig:axa_spread_dist} and \ref{fig:spread_dist}, this is not consistent with actual data.

It is also important to note that the SDSH model is not limited to reproducing the spread distribution, but is a more comprehensive framework which can capture a wider range of spread dynamics.

\begin{figure}[h]
    \centering
    \includegraphics[width=\textwidth]{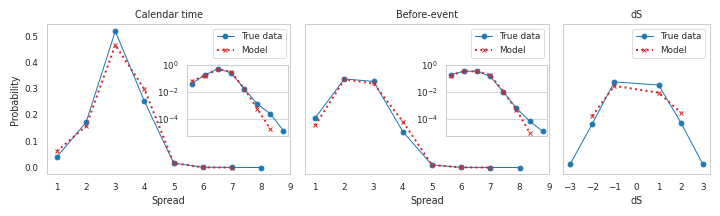}
    \caption{Spread distribution, comparison between true data and the the data obtained through simulation for AXA. The left-hand figure is the Calendar time distributions, the middle figure is the event time distributions, and the right-hand figure is the distributions of spread jumps size.}
    \label{fig:axa_spread_dist}
\end{figure}

\subsubsection{Inter-event time distributions}
Let $\{t_n\}_n$ be the successive times the spread process $S$ jumps (i.e., the spread changes). We define the  inter-event-times $\{\Delta t_n\}_n$  by 
$$
\Delta t_n = t_{n} - t_{n-1}.
$$
The first plot on the left of Fig.\ref{fig:axa_delta_t} displays the empirical  unconditional distribution of the inter-event-times for both the AXA true data and some simulated data using our model (fitted with AXA data). 
The two other plots on the right hand-side of this latter plot
show some conditional inter-event-time distributions. More precisely we define the set 
\begin{equation}
\label{eq:uncondinterev}
\{\Delta t_{S_1\to S_2}\} := \{\Delta t_i~\mid~ S(t_i+)=S_1,~ S(t_{i+1}+)=S_2\},
\end{equation}
where $S(t_i+)$ is the value of the spread immediately after the jump at time $t_i$.
The plot of Fig.\ref{fig:axa_delta_t} in the middle shows the distribution of $\{\Delta t_{1\to 3}\}$ whereas the one on the extreme right shows the distribution of $\{\Delta t_{3\to 2}\}$. 

 We see that the model performs extremely well in reproducing the unconditional and conditional inter-event-time distributions.
 % ({\bf Ruihua : ??? je pense qu'il faut mettre toutes les figures dans le texte, pour tous les cours ...}) 
The plots in Fig.\ref{fig:axa_delta_t_qq} display the qq-plots of the true distributions versus the model distributions. Their  linear behavior show how good the fit of the model is. 

We invite the reader to look at the results obtained for BNP, NOKIA or CAC40 Future in Appendix \ref{app:interevent} (Figure \ref{fig:bnp_interevent_time}, \ref{fig:nokia_interevent_time} and \ref{fig:future_interevent_time}). 
% For BNP and CAC40 Future the fits are at least as good as the one for AXA. For Nokia it is not as good, but this is certainly due to the large number of parameters of the NOKIA model \imp{??? a discuter ensemble}

% \imp{Dire que le modèle capture bien la dynamique autour des deux échelles caractéristique, 2-modal distribution qui correspond une mixture de deux distributions centrées autour de 2 échelles différentes. C'est quoi ces deux échelles ????}

% \imp{Comparer à d'autres modèles de la littérature. Test de Kolmogorov ????}

\begin{figure}[h]
     \centering
    \begin{subfigure}[b]{0.8\textwidth}
         \centering
         \includegraphics[width=\textwidth]{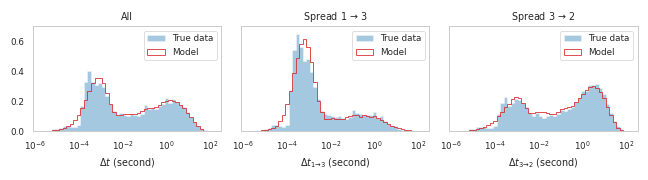}
         \caption{}
         \label{fig:axa_delta_t}
     \end{subfigure}
     \begin{subfigure}[b]{0.8\textwidth}
         \centering
         \includegraphics[width=\textwidth]{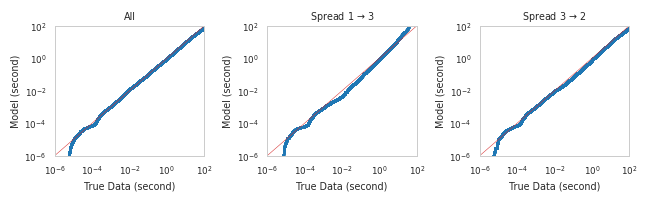}
         \caption{}
         \label{fig:axa_delta_t_qq}
     \end{subfigure}
     \caption{AXA Inter-event time unconditional and conditional distributions. Comparison between AXA true data and data obtained through simulation of our model (fitted on AXA data) (a) AXA true distributions versus model distributions for unconditional distribution or conditional distributions (see \eqref{eq:uncondinterev}. The x-axis is on log scale. (b) Corresponding qq-plots in  log-log scales.}
     % }
     \label{fig:axa_interevent_time}
\end{figure}

% {\bf ?? A discuter : Pourquoi 2 bosses ? est ce du à la latence. 
% Ruihua : on t'avait dit de faire :  une simulation avec un noyau avec latence pour voir quand on voit 2 bosses dans les inter event times}
% \ruihua{c'est possible de que ce soit a cause de la forme de noyau ? par example si l'on prend $\beta_1=1s^{-1}, \beta_2=1000s^{-1}$, la distribution des inter event time est comme cela \ref{fig:2bumps}.
% \begin{figure}[h]
%     \centering
%     \includegraphics[width=0.4\textwidth]{figures/2bumps.png}
%     \caption{example, $\beta_1=1s^{-1}, \beta_2=1000s^{-1}$}
%     \label{fig:2bumps}
% \end{figure}}

\subsubsection{Spread Autocorrelation}
In this section, we study the auto-correlation function of the spread for true data and check in which respect the SDSH spread is able to faithfully reproduce it. 

Estimation  of the autocorrelation function of the spread on true data could a priori be done using straightforward  quadratic covariations 
% \imp{Je comprends pas trop ce que "quadratic variation vient faire ici ! ---> standard correlogram  ?} 
on all the available data. However such an estimation is likely to lead to some highly biased results. Indeed, it is well known that the orderbook dynamics (and consequently the spread dynamics) is subject to long range correlations and some very strong intraday seasonal effects \cite{bouchaud2009markets,gross2013predicting, fall2021forecasting}. The seasonal effects must be taken care of in come ways in order to avoid these biases. One classical to do so is to limit the computation of the quadratic covariations 
% \imp{même remarque ===> quadratic variation ????}  
so some limited time slot every day (assuming that the seasonal effects between the different days of the weeks are of second order). 
In our case, we used 15min time slots on the real data/ More precisely we used everyday the eight 15min slots between 10am and 12. As far as the model simulated data are concerned we used a single 2 hour time-slot (since there is no seasonality in the model, the length of the time slot is not important). 

The plot of Fig.\ref{fig:covaxa} show these estimations for AXA (Additional plots for BNP, NOKIA and CAC40 are available in Appendix \ref{more_num_res} Fig.\ref{fig:spread-cor}). On each plot the autocorrelation function for the true data and the autocorrelation function for the simulated data are shown, both in lin-lin scale and in log-log scale. 
Again the fits are amazingly good. The model succeeds in reproducing the autocorrelation function of the spread with a very good accuracy. 

\begin{figure}[ht]
    \centering
    \includegraphics[width=0.5\textwidth]{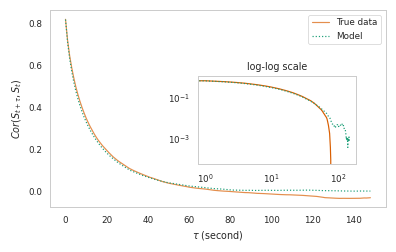}
     \caption{Auto-correlation function of the spread for AXA. True data versus model-simulated data. Each plot corresponds to a different asset. Quadratic variations were used for estimation. Eight 15min time-slots (between 10am and 12pm) were used everyday for true data in order to avoid intraday seasonal effects. 2 hours slots were used for model-simulated data.}
     \label{fig:covaxa}
\end{figure}

\subsubsection{Autocovariance of spread increments}
\label{sec:autocorrelation_spread}
In this section we focus on the autocovariance function of the spread increments. Let us first define the exact quantity under study.

Let us define the infinitesimal covariance of the infinitesimal measure $dS_t$, namely $Cov(dS_t,dS_{t'})$ which, assuming it is stationary, only depends on $(t'-t)$. Let us refer to it as $g(t'-t)$, i.e., 
\begin{eqnarray*}
g(t'-t)dtdt' & = & Cov(dS_t,dS_{t'}) \\
& = & 
\mathbf{E}[dS_t,dS_{t'}] - 
\mathbf{E}[dS_t] \mathbf{E}[dS_{t'}],
\end{eqnarray*}
Then, one gets, $\forall \delta >0$ and  $\forall \tau >0$,
\begin{eqnarray*} 
Cov(S_{t+\delta}-S_{t},S_{t+\delta+\tau}-S_{t+\tau}) 
& = & 
\mathbf{E} [\int_t^{t+\delta} dS_u \int_{t+\tau}^{t+\tau+\delta} dS_v] - 
\mathbf{E} [\int_t^{t+\delta} dS_u]
\mathbf{E}[\int_{t+\tau}^{t+\tau+\delta} dS_v] \\
& = & 
 \int_t^{t+\delta} \int_{t+\tau}^{t+\tau+\delta} \left(\mathbf{E}[dS_u dS_v] -  \mathbf{E}[dS_u] \mathbf{E}[dS_v]  \right)  \\
 & = & 
 \int_t^{t+\delta} \int_{t+\tau}^{t+\tau+\delta} g(v-u) dudv \\
 & = &
 \delta^2\int_0^{1} \int_{0}^{1} g(\tau+\delta(v-u)) dudv
\end{eqnarray*}
It is thus natural to introduce the normalized quantity (the relative covariance of spread increment during $\delta$ seconds with lag $\tau$ seconds)
\begin{equation}
\label{eq:ACV}
    ACV(\delta, \tau) := \frac{1}{\delta^2}Cov(S_{t+\delta}-S_{t},S_{t+\delta+\tau}-S_{t+\tau})
\end{equation}
Let us point out that the function $g(x)$ corresponds to the infinitesimal covariance function of a stationary process, it should thus decreases to 0 when the lag $x$ goes to infinity. It seems reasonable to assume that it does so in a "regular way", i.e., that there exist $\epsilon > 0$ and $K>0$ such that 
$$
|g'(x)| < K x^{-\epsilon},~~~\forall x>0
$$
% then, from previous equations, we get, in the case $\delta, \delta' \ll \tau$ :  
% \begin{eqnarray*}
% |ACV(\delta, \tau)-ACV(\delta', \tau)|  & = &  
% \left| \int_0^{1} \int_{0}^{1} (f(\tau+\delta(v-u))- 
% f(\tau+\delta'(v-u)))
% dudv \right| \\
% & \le  & 
%  \frac 1 {\tau^{\epsilon}} \int_0^{1} \int_{0}^{1} 
%  \left|
%  \frac 1 {(1+\frac \delta \tau (v-u))^{\epsilon}}
%  -\frac 1 {(1+\frac {\delta'} \tau (v-u))^{\epsilon}}
%  \right|
% dudv  \\
% & \le &
%  \frac {K'} {\tau^{\epsilon}} \frac {|\delta-\delta'|} \tau 
% \le  {K'} {\tau^{-\epsilon}} \frac {\sup(\delta,\delta')} \tau 
% \end{eqnarray*}
% \ruihua{
% If we give the condition : $\exists \mbox{ a positive real number } M < \infty$ s.t. $f'(x) \leq \frac{M}{x}$
% \begin{eqnarray*}
% |ACV(\delta, \tau)-ACV(\delta', \tau)|  & = &  
% \left| \int_0^{1} \int_{0}^{1} (f(\tau+\delta(v-u))- 
% f(\tau+\delta'(v-u)))
% dudv \right| \\
% & \leq & 
%  \int_0^{1} \int_{0}^{1} \int_{\delta'(v-u)}^{\delta(v-u)} |f'(\tau+w)|dwdudv\\
% & \leq &
% M\int_0^{1} \int_{0}^{1} \int_{\delta'(v-u)}^{\delta(v-u)} \frac 1 {\tau+w} dwdudv\\
% & \leq & 
% M\int_0^{1} \int_{0}^{1} \frac{|(\delta-\delta')(v-u)|}{\tau-(\delta+\delta')} dudv\\
% & \leq & 
% 2M\frac{|(\delta-\delta')|}{\tau-(\delta+\delta')}\\
% &\rightarrow & 0 \mbox{ (when $\frac{\max(\delta, \delta')}{\tau}\to 0+$)}
% \end{eqnarray*}
% }
Considering this assumption to be true, 
one gets, for any $\tau > \delta > \delta' > 0 $, that 
\begin{eqnarray*}
|ACV(\delta, \tau)-ACV(\delta', \tau)|  & = &  
\left| \int_0^{1} \int_{0}^{1} (g(\tau+\delta(v-u))- 
g(\tau+\delta'(v-u)))
dudv \right| \\
& \leq & (\delta-\delta') \max_{|y| \in [0,\max(\delta,\delta')]}{|g'(\tau + y)|}
 \int_0^{1} \int_{0}^{1} (v-u) \\
& \leq & (\delta-\delta') \max_{|y| \in [0,\max(\delta,\delta')]}{|g'(\tau + y)|}
\\
& \leq & K\frac {(\delta-\delta')}{(\tau-\delta)^\epsilon}
\end{eqnarray*}
Consequently, the covariance function $ACV(\delta,\tau)$ can be estimated independently of the value of $\delta$ as long as the lag $\tau$ is large enough compared to the value of $\delta$ that is being used. 
So in order to avoid having to perform estimation for all values of $\delta$ and $\tau$,  theoretically,  one could 
 fix a very small value for $\delta$ and then estimate the ACV function on a range of $\tau$ that statisfies $\tau >> \delta$. However, as we will see, this will lead to very (high frequency) noisy estimations. In order to smooth out the estimation one needs to use a value for $\delta$ that is much smaller that $\tau$ but not "too" small. 

Fig.\ref{fig:axa_acv_colors} not only illustrates this discussion but illustrates the fact that the model perfectly reproduces all the statistical features of the true data (of the AXA asset).
The insert in the figure shows for a fixed $delta=0.1s$ the function $-ACV(\delta,\tau)$ as a function of $\tau$ in a log-log scale. We chose to represent $-ACV$ instead of $ACV$ since, not surprisingly, due to the mean reversion property of the spread, we expect $ACV$ to be mainly negative. One sees that if $\delta$ is too small compared to $\tau$ the result gets extremely noisy. Moreover one sees that  the autocovariance function computed using the model simulated data reproduces very well the behavior of the one computed using the true data. 

The main plot displays the different estimations of the $ACV(\delta,\tau)$ as a function of $\tau$ for different values of $\delta$ (indicated on the legend on the right hand-side of the plot). Following what we just said, we limited for each $\delta$
the estimation of $ACV(\delta,\tau)$ for values of $\tau$ on a range so that $\delta$ is small compared to $\tau$ but not too small. Doing so, we see that all the estimation curves fall "on top of the over", letting discover a smooth curve for the auto-covariance function on almost 7 decades of values for $\tau$. 
This curve is close to be linear indicating that the auto-covariance function is close to be power-law. One can see the slight latency bump around $\tau =200\mu s$ that was already pointed out in Section \ref{sec:hawkes_kernel} when discussing Fig. \ref{fig:axa_kern_contrariant} and \ref{fig:kern_contrariant}. But what is really impressive is the way the estimation computed using the model-simulated data fits the estimation on true data on the whole range of scales. The fit is very accurate. 

Last but not least, let us point out that one  gets the exact same results when considering the other assets. This is illustrated in Fig. \ref{fig:all_assets_acv}. Not only the model fits are extremely impressive but the auto-covariance curve look very much alike across the assets.  

This seems to be the signature of what could be a stylized fact of the spread empirical processes

% Define the covariance of spread increment during $\delta$s with lag $\tau$s by 
% $$
% ACV(\delta, \tau) = \frac{1}{\delta^2}Cov(S_{t+\delta}-S_{t},S_{t+\delta+\tau}-S_{t+\tau})
% $$
% where $ACV$ does not depend on $t$ since we suppose that the increments {\bf ?? le processus lui meme} of $S_t$ is stationary.

\begin{figure}[h]
     \centering
    \includegraphics[width=\textwidth]{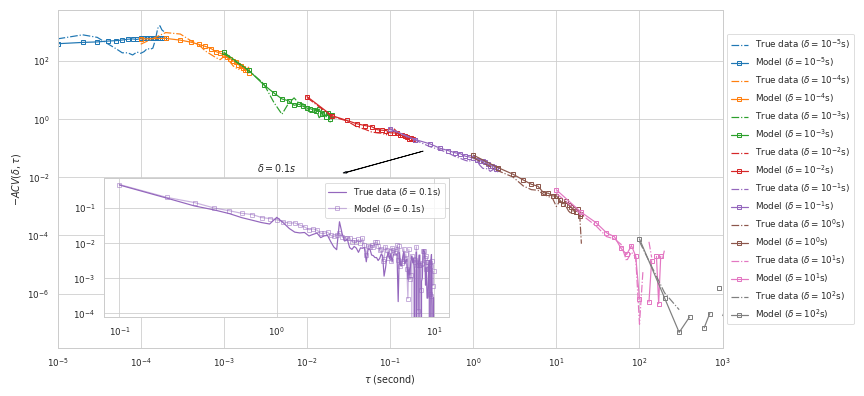}
    % \caption{$-ACV$ function of AXA. Each color represents a $\delta$, and for each $\delta$, we plot $\{-ACV(\delta, \tau_n)|\tau_n = n\delta, n=1,2,...,20\}$}
    \caption{The $-ACV(\delta,\tau)$ functions for different values of $\delta$ as a function of $\tau$ using a log-log scale both for using the AXA true data and the model-simulated data (fitted on AXA true data). As expected (see discussion) the curves for different $\delta$'s fall one onto the over letting discover a power-law behavior with a slight bump close to the latency time scale $\simeq 200\mu s$ (see Section \ref{sec:hawkes_kernel}). The figure in the inset shows the particular curve $ACV(\delta,\tau)$ for $\delta=0.1s$. It clearly confirms the fact that if $\tau$ is too far to $\delta$ the estimation gets very noisy.}
    \label{fig:axa_acv_colors} 
\end{figure}

\section{Illustration on using SDSH model for spread forecasting}\label{sec:pred}
The goal of this section is to show that the SDSH model is a good candidate to be  used for high-frequency spread forecasting purposes. Spread forecasting is a difficult task that  deserves a full paper dedicated on it with comparisons with state of the art methods. It is clearly out of the scope of our paper. 
In this section, we are far less ambitious and we just intend to give some very preliminary results that allow us to glimpse the possibilities of such an application. 

The issue we consider is the prediction the spread value at a specific time horizon $\Delta$ in the very near future (i.e., $\Delta < 1$ min).
According to Fig. \ref{fig:axa_acv_colors}, a one-minute time window should be sufficient to capture the direct impact of past events on future spread values. Taking into account only the direct impact these events should be enough to capture most of the dynamics and to give a good idea of the order of the performance SDSH can reach. 
Since we don't have any explicit expression for the spread distribution, we rely on Monte Carlo simulations to estimate the expected spread value at the given time horizon. Let us be more precise. 

At time $t_0$,  the spread point process during the time interval $[t_0-60, t_0]$ is sum up by $\{S_u\}_{t_0-60<u<t_0}$ or alternatively by $S_{t_0-60}$ and all the events 
$\{(t_i, e_i), t_0-60\le t_i<t_0\}_{i=1,...,n}$ that occur within the time-interval $[t_0-60,t_0]$. 
Thus, in order to estimate $\mathbb{E}[S_{t_0+\Delta}|S_{u, u \in [t_0-60, t_0]}]$, i.e., the conditional mean at time-horizon $\Delta$, 
we simulate 100 processes from $t_0$ to $t_0+\Delta$ with the intensity function at time $t\in [t_0,t_0+\Delta]$ defined as
\begin{equation}
    \label{eq:lambda_pred}
\lambda^e_t = f^e(S_{t-})\bigprt{\tilde\mu^e(t) + \sum_{e'\in\mathcal{E}}\int_{t_0}^t \phi^{e,e'}(t-u)dS^{e'}_u} 
\end{equation}
where the baseline $\tilde\mu^e(t)$ is a function on $t\in[t_0, t_0+\Delta]$ defined by 
\begin{equation}
    \label{eq:mu_pred}
\tilde\mu^e(t)=\mu^e + \sum_{e'\in\mathcal{E}}\sum_{i=1}^n 1_{e_i=e'}\phi^{e,e'}(t-t_i) \; .
\end{equation}
Before illustrating the performance of such an estimator, let us briefly describe an alternative model, previously introduced in \cite{gross2013predicting}, that we will use as a benchmark for our numerical tests.

\paragraph{Introducing a benchmark: the Autoregressive Conditional Double Poisson (ACDP) model}
In \cite{gross2013predicting}, the authors introduce various spread prediction models which are all based on the use of autoregressive conditional Poisson model. As our focus is on high-frequency prediction, we will consider only the so-called ACDP model and not its long-memory version (i.e., the LMACP, the Long Memory Autoregressive Conditional Poisson model).
Again, let us point out that building an optimized SDSH-based forecast procedure and comparing it with the main state of the art procedures is out of the scope of this paper. A forthcoming work will be specifically dedicated to these issues.

Within the ACDP framework, the spread process $S_k, k\in \mathbb{Z}$ is a discrete time series, consisting of spread values subsampled every $\Delta=30s$ (as shown below, we explore the effects of varying this sub-sampling interval which plays also the role of the time-horizon between 3 seconds and 30 seconds) : 
\begin{equation}
    \begin{split}
        &\lambda_k = c + \alpha S'_{k-1} + \beta \lambda_{k-1}\\
        &S'_k|\mathcal{F}_{k-1} \sim \mathcal{DP}(\lambda_k, \gamma)
    \end{split}
\end{equation}
where $S'_k = S_k-1 \in \mathbb{N}$, and $\mathcal{DP}(\lambda_k, \gamma)$ stands for the Double Poisson distribution defined by
$$
\mathbb{P}(S'_k=n|\lambda_k, \gamma) = c(\gamma,\lambda_k)\gamma^{1/2}e^{-\gamma\lambda_k}\bigprt{\frac{e^{-n}n^n}{n!}}\bigprt{\frac{e\lambda_k}{n}}^{\gamma n}
$$
Under this model, $\mathbb{E}[S'_k|\mathcal{F}_{k-1}]=\lambda_k$.

As shown in \cite{gross2013predicting}, the log likelihood function of ACDP model reads:
$$
\log \mathcal{L}(c, \alpha, \beta, \gamma|S'_{[1:N]}) = \sum_{k=1}^T \Bigprt{\frac{1}{2}\log(\gamma)-\gamma\lambda_k + S'_k(\log S'_k-1)) - \log(S'_k!)+\gamma S'_k \bigprt{1+\log(\frac{\lambda_k}{S'_k})}}
$$
where $\lambda_t = (1-\beta B)^{-1}(c + \alpha B(S'_t))$ with $B(S'_t)=S'_{t-1}$ being the backshift operator which gives,
when $\beta<1$:
$$
\lambda_t = \sum_{i=0}^\infty \beta^i B^i (c + \alpha B(S'_t))
$$
In practice, one estimates $\lambda_t$ based on a truncation of the infinite sum, \ie, $\lambda_t = \sum_{i=0}^N \beta^i B^i (c + \alpha B(S'_t))$, where $N$ is an hyper parameter of the method. In this work, we set $N$ to be 60. In fact, after experimenting with different values of $N$, we found that the results remain consistent as long as $N$ is greater than 10. 

\paragraph{Numerical results}
Let us present the numerical results on the relative performances of SDSH and ACDP based methods. We are going to compare three different predictors for time-horizons varying from $\Delta=3s$ up to $\Delta=30s$. 
\begin{itemize}
    \item[-] {\bf SDSH predictor} :
    \begin{itemize} 
    \item[-] For each stock, the SDSH model is calibrated on training data which consists of 30-day spread dynamics within a 2-hour window each day (from 10am to 12pm) for avoiding strong seasonal effects. 
    \item[-] No re-estimation of the model parameters is made along running the test
    \item[-] At each time, forecasting is made using the previous 1 minute data following Eqs \eqref{eq:lambda_pred} and \eqref{eq:mu_pred}
    \end{itemize}
\item {\bf ACDP predictor} :
\begin{itemize}
    \item[-] For each stock, the ACDP model is calibrated using a time rolling window of 1-hour of spread data subsampled every $\Delta$ seconds.  
    \item[-] Parameters are  reestimated every 10 minutes to ensure that the predictions remain up-to-date. 
    \item[-] The model employs a one-step ahead prediction strategy using the most recent estimated model for
forecasting
\end{itemize}
\item {\bf Last predictor} : A simple benchmark predictor that consists in considering a that $S_t$ is a martingale, i.e., in taking the last observable value of the spread $\hat{S}_{t_0+\Delta} = S_{t_0}$ as the best prediction. 
\end{itemize}
All methods are evaluated over a test period lasting 50 consecutive days, following the initial 30-day training period for the SDSH model. The evaluation takes place specifically from 11am to 12pm every day. This time-frame is chosen to accommodate the ACDP method, as it requires a minimum of one hour to calibrate the model before generating predictions. Therefore, we cannot make predictions before 11am. 

We evaluate these two predictors on three stocks (AXA, BNP Paribas and Nokia) as well as the Cac40 index future. The performance comparison is presented in Table \ref{tab:hyperparam}. 

\begin{table}[!h]
\centering
\begin{subtable}[h]{0.49\textwidth}
\setlength\tabcolsep{7pt}
\centering
\begin{tabular}{ccccc}
\hline
$\Delta$ & 3s & 6s & 12s & 30s\\
\hline
{Last} & 0.408 & 0.587 & 0.784 & 1.082\\
ACDP & 0.514 & 0.520 & 0.565 & 0.808\\
SDSH & \textbf{0.363} & \textbf{0.478} & \textbf{0.561} & \textbf{0.648}\\
\hline
\end{tabular}
\caption{AXA}
\end{subtable}
\begin{subtable}[h]{0.49\textwidth}
\centering
\setlength\tabcolsep{6pt}
\begin{tabular}{ccccc}
\hline
$\Delta$ & 3s & 6s & 12s & 30s\\
\hline
Last & 0.833 & 1.177 & 1.487 & 1.818\\
ACDP & 0.737 & \textbf{0.851} & 0.971 & 1.320\\
SDSH & \textbf{0.692} & 0.865 & \textbf{0.964} & \textbf{1.065} \\
\hline
\end{tabular}
\caption{BNP}
\end{subtable}
\begin{subtable}[h]{0.49\textwidth}
\centering
\setlength\tabcolsep{7pt}
\begin{tabular}{ccccc}
\hline
$\Delta$ & 3s & 6s & 12s & 30s\\
\hline
Last & 0.388 & 0.609 & 0.904 & 1.340\\
ACDP & 0.508 & 0.609 & \textbf{0.744} & \textbf{1.031} \\
SDSH & \textbf{0.361} & \textbf{0.543} & 0.766 & 1.081\\
\hline
\end{tabular}
\caption{NOKIA}
\end{subtable}
\begin{subtable}[h]{0.49\textwidth}
\centering
\setlength\tabcolsep{7pt}
\begin{tabular}{ccccc}
\hline
$\Delta$ & 3s & 6s & 12s & 30s\\
\hline
Last & 0.370 & 0.421 & 0.445 & 0.457\\
ACDP & 0.283 & 0.258 & 0.256 & 0.267\\
SDSH & \textbf{0.230} & \textbf{0.240} & \textbf{0.244} & \textbf{0.245}\\
\hline
\end{tabular}
\caption{Cac40 Index Future}
\end{subtable}
\caption{Mean Square Error (MSE) of the 3 different predictors, Last, ACDP and SDSH for different time sub-samplings $\Delta$ (which plays also the role of the time-horizon).}
\label{tab:hyperparam}
\end{table}

As table \ref{tab:hyperparam} shows, the Last predictor is always worse than the SDSH predictor though it outperforms the ACDP predictor sometimes at the highest frequency (i.e., $\Delta=3s$, for AXA and NOKIA).
Moreover the SDSH predictor outperforms most of the time the ACDP predictor (and systematically for the highest frequency $\Delta=3s$). These results are very encouraging as far as the SDSH prediction performances are concerned. A Detailed comparison with several state of the art predictors and for a wider range of time horizons will be addressed in a forthcoming work.

\section{Conclusion}\label{sec:conclusion}
In this paper, we introduced a State Dependent Hawkes process (SDSH) for modelling bid-ask spread fluctuations, which generalizes the spread models presented in \cite{zheng2014modelling} and \cite{fosset2020endogenous}. Our model is a 2K-variate Hawkes process which can accommodate different jump sizes (ranging from 1 to K). In order to account for the current spread value $S_t$, we introduced a spread-dependent term $f^e(S_t)$ (where $e$ denotes an event type) and multiplied the classical Hawkes intensity by this term. We chose to use the sum of exponential kernels to benefit from the Markovian properties of the model. Notably, we demonstrated the ergodicity property in a particular case, indicating that the spread process converges to a stationary distribution over time.

Then we calibrated our SDSH model using high-frequency data obtained from Cac40 Euronext Market, including three stocks (AXA, BNP, Nokia) and the Cac40 Future index. We examined the estimated spread-dependent term $f$, as well as kernel functions to better understand how they affect the spread dynamics. Our analysis revealed that the estimated $f^e(\cdot)$ is decreasing when $e$ is an event which increased the spread, while for at least one downward event $e'$, $f^{e'}(\cdot)$ is increasing. Such $f^e$ is able to press the spread down when its value is high by exciting more downward events. In terms of kernel functions, our estimation results showed that most of them tend to decrease very slowly, resembling a power-law kernel function. 

We studied the ability of this model to capture various spread statistics. We found that our model very successfully replicates the spread distributions, measured by both calendar time and event time. We also observed that the model accurately reproduces other important statistical features, including the distributions of inter-event times, spread autocorrelations as well as spread increments autocovariance. 

Finally we demonstrated the effectiveness of the SDSH model in predicting the spread across different sample window sizes. Our results suggest that the SDSH model is a reliable and robust choice for predicting the spread. 

As part of future work, we plan to expand our spread model to higher dimension, for example by incorporating bid and ask processes. 

\section*{Acknowledgements}
We thank Euronext for making their data available to us. This research is partially supported by the Agence Nationale de la Recherche as part of the “Investissements d'avenir” program (reference ANR-19-P3IA-0001; PRAIRIE 3IA Institute).
%------------------------
% \printbibliography
% \bibliographystyle{apalike}
% \bibliography{ref}

\section*{Appendices}
\begin{appendices}
\section{Proof of V-uniform ergodicity}
\label{annex:ergo}
We restrict our model to the case where $K=1$ and $L=1$. The model then writes 
$$\lambda^e(t) = f^{e}(S_{t-})(\mu^e + \sum_{e'\in\mathcal{E}} \int_0^t \phi^{e, e'}(t-s) dS^{e'}_s) $$
where $\mathcal{E}=\{+1,-1\}=:\{+,-\}$ and  $\phi^{e, e'}(t)=\alpha^{e,e'}\beta e^{-\beta t}$

This section is devoted to the prove of the Proposition \ref{thm_ergodicity} of Section \ref{sec:markov} that we replicate here : 
\begin{proposition}
Let us consider the model given just above, assume that the following conditions are satisfied
\begin{empheq}[left={(A)=}\empheqlbrace]{align}
	&f^-(1) = 0 \tag{{A\textsubscript{1}}}\\
	&f^{-}(S) \geq \gamma S \text{ for some $\gamma > 0$ when $S\geq2$} \tag{{A\textsubscript{2}}}\\
	&\sup_{S}\{f^+(S)\}(\alpha^{+,-}+\alpha^{+,+}) < 1 \tag{{A\textsubscript{3}}}
\end{empheq}	
then the process $(S_t,X_t)$ is a V-uniformly ergodic Markov process.
\end{proposition}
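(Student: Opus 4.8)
The plan is to exploit the fact that, because $L=1$, each kernel is a single exponential and the Markov process $(S_t,X_t)$ of the preceding proposition collapses to a low-dimensional piecewise-deterministic Markov process (PDMP). Writing $Y^{e'}_t := \beta\int_0^t e^{-\beta(t-s)}\,dS^{e'}_s$ so that $X^{e,e'}_t=\alpha^{e,e'}Y^{e'}_t$, the triple $(S_t,Y^+_t,Y^-_t)$ on the state space $\mathbb{N}^*\times\mathbb{R}_+^2$ is Markov: between events each $Y^{e'}$ decays as $\dot Y^{e'}=-\beta Y^{e'}$, while at an event of type $e'$ the coordinate $Y^{e'}$ jumps by $\beta$ and $S$ jumps by $e'$, the rates being $\lambda^\pm=f^\pm(S)(\mu^\pm+\alpha^{\pm,+}Y^++\alpha^{\pm,-}Y^-)$. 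I would then invoke the Foster--Lyapunov machinery of \cite{down1995exponential}: it suffices to exhibit a function $V\geq 1$, constants $c>0$, $d<\infty$ and a petite set $C$ such that the extended generator $\mathcal{A}$ satisfies the geometric drift inequality $\mathcal{A}V\leq -cV+d\,\mathbf{1}_C$, together with the topological conditions that make compact sets petite.

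Next I would settle the topological hypotheses. Condition (A\textsubscript{1}), $f^-(1)=0$, guarantees the process never leaves $\mathbb{N}^*$, so the boundary $S=1$ acts as a reflecting wall. Since $\mu^+,\mu^->0$, the upward and downward rates are strictly positive whenever they are allowed, and the deterministic flow drives $(Y^+,Y^-)$ toward the origin during any event-free interval; concatenating an event-free stretch with a run of downward jumps shows that from any starting point the process can be steered into an arbitrarily small neighborhood of $(1,0,0)$. This reachability, together with the Feller/PDMP structure, yields $\psi$-irreducibility, aperiodicity, and the fact that compact sets are petite (the process is a $T$-process in the sense of \cite{down1995exponential}). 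Non-explosion, needed for $\mathcal{A}$ to be the genuine generator, follows from the sub-criticality built into (A\textsubscript{3}).

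The heart of the argument is the construction of $V$ and the verification of the drift, and this is where the three conditions enter in a complementary way. I would take $V(S,Y^+,Y^-)=\psi(S)+AY^++BY^-$ with a linear or mildly exponential weight $\psi$, for which
\begin{equation*}
\mathcal{A}V = \lambda^+\!\left[\psi(S+1)-\psi(S)+A\beta\right]+\lambda^-\!\left[\psi(S-1)-\psi(S)+B\beta\right]-A\beta Y^+-B\beta Y^-.
\end{equation*}
Substituting the expressions for $\lambda^\pm$ and collecting the coefficients of $Y^+$ and $Y^-$, condition (A\textsubscript{3}) --- which caps the upward self-excitation $\sup_S f^+(S)(\alpha^{+,+}+\alpha^{+,-})<1$ --- is exactly what forces the net coefficient of $Y^+$ to be strictly negative, preventing the upward-event feedback from sustaining an unbounded $Y^+$; condition (A\textsubscript{2}), $f^-(S)\geq\gamma S$, makes $\lambda^-$ grow at least linearly in $S$, so that the $\psi(S-1)-\psi(S)$ term generates a mean-reverting drift $\lesssim -\gamma S$ that dominates the bounded upward contribution for large $S$. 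Tuning $A$, $B$ and the growth of $\psi$, these effects combine to give $\mathcal{A}V\leq -cV+d\,\mathbf{1}_C$ on a compact $C$.

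The main obstacle is precisely this last verification: the three coordinates are coupled through $\lambda^\pm$, so one cannot control $S$, $Y^+$ and $Y^-$ separately, and one must choose the constants and the shape of $\psi$ so that the negative contributions coming from (A\textsubscript{2}) and (A\textsubscript{3}) simultaneously dominate the positive cross-terms uniformly outside a compact set --- this is why (A\textsubscript{3}) involves the full combination $\alpha^{+,+}+\alpha^{+,-}$ rather than the diagonal alone. A secondary delicate point is making the petiteness and irreducibility argument fully rigorous for a mixed discrete--continuous PDMP with a state-dependent boundary at $S=1$. Once the drift inequality and the topological conditions are in place, the theorem of \cite{down1995exponential} delivers $V$-uniform, hence geometric, ergodicity.
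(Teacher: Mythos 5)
Your proposal is correct and follows essentially the same route as the paper's proof: a Foster--Lyapunov function linear in the spread and in the exponential-memory variables, a generator drift computation in which (A\textsubscript{3}) controls the coefficients of the excitation variables and (A\textsubscript{2}) supplies the mean-reverting $-\gamma S$ term, and then the theorem of \cite{down1995exponential}; indeed your $V(S,Y^+,Y^-)=\psi(S)+AY^++BY^-$ is exactly the paper's $V=\sum_{l,m}\eta_{lm}X_{lm}+\eta S$ after the substitution $X_{lm}=\alpha_{lm}Y_m$, with $A=\eta_{11}\alpha_{11}+\eta_{21}\alpha_{21}$ and $B=\eta_{12}\alpha_{12}+\eta_{22}\alpha_{22}$, and the solvability of your constraint system under $\sup_S f^+(S)(\alpha^{+,+}+\alpha^{+,-})<1$ is what the paper's hypotheses $(\mathcal{H})$ and explicit choice \eqref{eta_0} encode. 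Your reduced coordinates $(S,Y^+,Y^-)$ are a leaner parameterization, and your attention to $\psi$-irreducibility, aperiodicity and petiteness is a point the paper's proof leaves implicit, but the substance of the argument is the same.
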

\begin{remark}
    It is worth noting that in this model, the last condition $\sup_{S}\{f^+(S)\}(\alpha^{+,-}+\alpha^{+,+}) < 1$ is equivalent to $(\alpha^{+,-}+\alpha^{+,+}) < 1$ and $\sup_{S}\{f^+(S)\} <= 1$. 
    
\end{remark}
\begin{remark}
	From now, we replace event $+$ by $1$ and $-$ by $2$. To avoid the ambiguity of notation, in the following part, we replace the superscripts by subscripts. 
\end{remark}
\begin{proposition}
Define $X_{e,e'}(t) := \int_0^t  \phi^{e,e'}(t-s) dN_s^{e'}$. Then $(X, S)$ is Markovian, where $X=(X_{lm})_{l,m\in\{1,2\}}$.
\end{proposition}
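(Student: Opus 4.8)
The plan is to exploit the memorylessness of the exponential kernel to turn each auxiliary process $X_{e,e'}$ into the solution of a linear stochastic differential equation driven only by the point process $N^{e'}$ and by its own current value. Concretely, since $\phi^{e,e'}(t) = \alpha^{e,e'}\beta e^{-\beta t}$, writing $X_{e,e'}(t) = \alpha^{e,e'}\beta\, e^{-\beta t}\int_0^t e^{\beta s}\,dN_s^{e'}$ and differentiating yields
\begin{equation}
dX_{e,e'}(t) = -\beta\, X_{e,e'}(t)\, dt + \alpha^{e,e'}\beta\, dN_t^{e'}.
\end{equation}
First I would establish this identity carefully, checking the boundary contribution at $s=t$ so that at a jump time of $N^m$ the component $X_{e,m}$ jumps by exactly $\alpha^{e,m}\beta$ while the other components are continuous. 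This is the step on which everything hinges: for a general kernel, $X_{e,e'}(t)$ would be a functional of the entire past jump configuration, whereas the exponential form collapses that dependence into the single current value $X_{e,e'}(t)$.

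Next I would record that the conditional intensities close the system. By definition $\lambda^e_t = f^e(S_{t^-})\left(\mu^e + \sum_{e'} X_{e,e'}(t^-)\right)$, which is a deterministic, measurable function $\Lambda^e$ of the current state $U_{t^-} := (S_{t^-}, X_{t^-}) \in \mathbb{N}^* \times \mathbb{R}^4$ alone. Combined with $dS_t = \sum_e e\, dN_t^e$ and the SDE above, this shows that $U_t$ obeys a self-contained dynamics: between consecutive jumps $S$ is constant and each $X_{e,e'}$ relaxes deterministically at rate $\beta$ (the flow of $\dot X = -\beta X$), while at a jump of type $m$ the state is updated by the deterministic map $S \mapsto S+m$, $X_{e,m}\mapsto X_{e,m}+\alpha^{e,m}\beta$, both this map and the jump rate $\Lambda^m(U_{t^-})$ depending on the past only through $U_{t^-}$.

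Finally I would conclude that $U_t = (S_t, X_t)$ is a piecewise-deterministic Markov process. The cleanest self-contained route is to verify the Markov property directly: given the history $\mathcal{F}_t$, the waiting time to the next jump, the type of that jump, and the post-jump value of $U$ are all determined by the deterministic flow started at $U_t$ together with an independent family of exponential clocks; iterating over successive jumps expresses the whole future trajectory $(U_{t+h})_{h\ge 0}$ as a measurable function of $U_t$ and of randomness independent of $\mathcal{F}_t$, so the conditional law of the future depends on $\mathcal{F}_t$ only through $U_t$. Alternatively one can simply invoke the standard theory of piecewise-deterministic Markov processes. I expect the only genuinely delicate point to be the rigorous justification of the differential identity together with the verification that the process does not explode on finite intervals, so that the successive-jumps construction is almost surely well defined; the Markov conclusion itself is then essentially bookkeeping.
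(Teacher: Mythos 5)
Your proposal is correct and takes essentially the same route as the paper: the paper deduces Markovianity from the memorylessness of the exponential kernels and records exactly your key identity $dX_{lm}(t) = -\beta X_{lm}(t)\,dt + \alpha_{lm}\,dN^m_t$ in its appendix, the intensities then being functions of the current state $(S_{t^-},X_{t^-})$ alone. Your write-up is in fact more detailed than the paper's, which treats the piecewise-deterministic-Markov bookkeeping and the non-explosion issue as implicit.
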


\subsection*{Lyapunov function}
%\textbf{Following the methods in \cite{ref1}}.
%
%If $A = (\alpha_{lm})_{lm} \in \mathbb{R}^{D*D}$ is the matrix of kernel norms, 
%\begin{itemize}
%	\item[-] $\kappa$ is the maximal eigenvalue of A
%	\item[-] $\epsilon$ is the eigenvector corresponding to $\kappa$ and $\forall l$, $\epsilon_l>0$
%\end{itemize}
%We note $\delta_{lm} = \delta_{l} = \frac{\epsilon_l}{\beta}$.

%\subsubsection{Lyapounov function for $X$}
As $X_{lm}(t) = \int_0^t  \alpha_{lm} e^{-\beta(t-s)}dN_s^m$, 
$$ dX_{lm}(t) = -\beta X_{lm}(t)dt + \alpha_{lm}dN^m_t$$

\paragraph{Function $V_{lm}(X_{lm})$} 
$$
V_{lm}(X_{lm}) = X_{lm}
$$
The infinitesimal generator $\mathcal{L}$ of $X_{lm}$ on $V_{lm}$
\begin{eqnarray*}
	\mathcal{L}V_{lm}(X)&=& \alpha_{lm} \lambda_m - \beta X_{lm}\\
					&=&  -\beta  X_{lm} + \alpha_{lm} f_m(S) \mu_m +  \alpha_{lm}  f_m(S) (\sum_n \beta X_{mn})\\
					&=& -\beta X_{lm} + \alpha_{lm}\mu_m f_m(S) + \alpha_{lm}\beta f_m(S) (X_{m1}+X_{m2})\\
\end{eqnarray*}

\paragraph{Function $V_S(S)$}
$$
V_2(S) = S
$$
As $dS_t = dN^{1}_t - dN^{2}_t$, the infinitesimal generator $\mathcal{L}$ of $S$ on $V_S$:
\begin{eqnarray*}
	\mathcal{L}V_2(S) &=& \lambda_{1} - \lambda_{2}\\
					  &=& f_{1}(S)\mu_{1}+\beta f_{1}(S)X_{11}+\beta f_{1}(S)X_{12}-f_{2}(S)\mu_{2}- \beta f_{2}(S)X_{21}- \beta f_{2}(S)X_{22}
\end{eqnarray*}

\paragraph{Function V on $(X,S)$}
Now we consider a function $V$ on $(X, S)$
$$
V(X, S) = \sum_{l,m\in\{1,2\}}\eta_{lm}X_{lm} + \eta S
$$
where $\eta_{lm}, \eta > 0$.

Then the infinitesimal generator $\mathcal{L}$ of $(X,S)$ on $V$ is:
\begin{eqnarray*}
	\mathcal{L}V(X,S)
		&=& -\eta_{11}\beta X_{11} + \eta_{11}\alpha_{11}\mu_1f_1(S) + \eta_{11}\alpha_{11}\beta f_1(S) (X_{11}+X_{12})\\
		& & -\eta_{12}\beta X_{12} + \eta_{12}\alpha_{12}\mu_2f_2(S) + \eta_{12}\alpha_{12}\beta f_2(S) (X_{21}+X_{22})\\
		& & - \eta_{21}\beta X_{21} + \eta_{21}\alpha_{21}\mu_1f_1(S) + \eta_{21}\alpha_{21}\beta f_1(S) (X_{11}+X_{12})\\
		& & -\eta_{22}\beta X_{22} + \eta_{22}\alpha_{22}\mu_2f_2(S) + \eta_{22}\alpha_{22}\beta f_2(S) (X_{21}+X_{22})\\
		& & + \eta f_{1}(S)\mu_{1}+ \eta\beta f_{1}(S)X_{11}+ \eta\beta f_{1}(S)X_{12}\\
		& & - \eta f_{2}(S)\mu_{2}- \eta\beta f_{2}(S)X_{21}- \eta\beta f_{2}(S)X_{22}
\end{eqnarray*}

Before giving the proof of Theorem \ref{thm_ergodicity}, we should mention the following theorem. See Theorem 5.2 in \cite{down1995exponential} and 2.5.2 in \cite{abergel2013mathematical}. 
\begin{theorem}
For a $\psi$-irreducible, aperiodic Markov process $X$, if the following drift condition ($\mathcal{D}$) holds, then $X$ is V-uniformly ergodic.

\begin{center}
    ($\mathcal{D}$)  For some $\rho, b > 0$ and a coercive function $V \geq 1$ 
        \begin{equation}
			\mathcal{L}V\leq -\rho V + b
        \end{equation}      
\end{center}
\end{theorem}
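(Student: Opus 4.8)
The plan is to recognise the statement as the continuous-time Foster--Lyapunov (drift) criterion for geometric ergodicity and to follow the program of Meyn--Tweedie and Down--Meyn--Tweedie \cite{down1995exponential}: convert the infinitesimal drift inequality into an integrated contraction of the semigroup acting on $V$, localise the drift onto a sublevel set that coercivity renders relatively compact, show that this set is \emph{petite} using the $\psi$-irreducibility and aperiodicity hypotheses, reduce to a discrete skeleton chain to invoke the discrete-time geometric ergodicity theorem, and finally interpolate back to continuous time.

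First I would integrate the drift. By Dynkin's formula and the hypothesis $\mathcal{L}V \leq -\rho V + b$, the map $t \mapsto P^t V(x) := \mathbb{E}_x[V(X_t)]$ satisfies $\tfrac{d}{dt} P^t V \leq -\rho\, P^t V + b$, so Gr\"onwall's inequality yields the semigroup estimate
\begin{equation*}
P^t V(x) \leq e^{-\rho t} V(x) + \frac{b}{\rho}, \qquad t \geq 0 .
\end{equation*}
Since $V$ is coercive, this uniform-in-time moment bound already forbids explosion and guarantees that any invariant law integrates $V$; it also furnishes the control of the $V$-norm of $P^t$ that is needed for the final interpolation.

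Next I would localise and discretise. Writing $C := \{V \leq 2b/\rho\}$, outside $C$ one has $-\rho V + b \leq -\tfrac{\rho}{2} V$, hence $\mathcal{L}V \leq -\tfrac{\rho}{2} V + b\,\mathbf{1}_C$; coercivity makes $C$ relatively compact. Under $\psi$-irreducibility and aperiodicity, compact sets are petite (via the resolvent kernel / $T$-process argument), so $C$ is petite. Integrating this localised drift over a fixed skeleton step $\delta$ gives a genuine one-step geometric drift $P^\delta V \leq \lambda V + L\,\mathbf{1}_C$ with $\lambda = e^{-\rho\delta/2} < 1$, to which the classical discrete-time $V$-uniform ergodicity theorem applies: the $\delta$-skeleton admits a unique invariant measure $\pi$ and $\|P^{n\delta}(x,\cdot) - \pi\|_V \leq R\,V(x)\,\kappa^n$ for some $\kappa \in (0,1)$.

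Finally I would lift this to continuous time: for $t = n\delta + r$ with $0 \le r < \delta$, write $P^t(x,\cdot) - \pi = P^r\big(P^{n\delta}(x,\cdot) - \pi\big)$ and bound the action of $P^r$ on the $V$-norm using the moment estimate above, which delivers $\|P^t(x,\cdot) - \pi\|_V \leq C\,V(x)\,e^{-\gamma t}$, i.e.\ $V$-uniform ergodicity. The main obstacle is the petiteness step: the drift integration and the interpolation are essentially mechanical, but turning the \emph{topological} compactness of the sublevel set $C$ into the \emph{probabilistic} petiteness property is exactly where the irreducibility, aperiodicity and an underlying continuity ($T$-process) structure must be used carefully, and it is the delicate point around which the Down--Meyn--Tweedie argument is organised.
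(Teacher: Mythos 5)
Your sketch is correct and follows essentially the same route as the paper, which in fact offers no proof of its own here but imports the result directly from Theorem 5.2 of \cite{down1995exponential} (see also Theorem 2.5.2 of \cite{abergel2013mathematical}) --- and the Down--Meyn--Tweedie argument is precisely your chain: Dynkin/Gr\"onwall semigroup contraction, localisation to a sublevel set, petiteness, skeleton chain with the discrete-time $V$-geometric ergodicity theorem, and interpolation back to continuous time. You also correctly flag the one genuinely delicate point, namely that coercivity of $V$ turns $\{V \le 2b/\rho\}$ into a petite set only through an additional continuity ($T$-process) property that the theorem's bare hypotheses of $\psi$-irreducibility and aperiodicity leave implicit, exactly as the cited source organises it.
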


\begin{proof}[Proof of Theorem \ref{thm_ergodicity}]\label{proof}
Suppose that we have already the following conditions:
\begin{empheq}[left={(\mathcal{H})=}\empheqlbrace]{align}
	& \eta_{12}\alpha_{12}+\eta_{22}\alpha_{22} < \eta \tag{H\textsubscript{1}} \\
	& \eta_{11}\alpha_{11}+\eta_{21}\alpha_{21} + \eta < \eta_{11}\tag{H\textsubscript{2}}\\
	& \eta_{11}\alpha_{11}+\eta_{21}\alpha_{21} + \eta < \eta_{12}\tag{H\textsubscript{3}}
\end{empheq}

To proceed, we will split $\mathcal{L}V(X,S)$ into four distinct parts and show their individual upper bounds.
\begin{equation*}
    \mathcal{L}V(X,S) = I_1 + I_2 + I_3 + I_4
\end{equation*}
where 
\begin{equation*}
\begin{split}
I_1 &= (1)+(2) \\
&= -\eta_{11}\beta X_{11} - \eta_{12}\beta X_{12} \\
&+ \eta_{11}\alpha_{11}\beta f_1(S) (X_{11}+X_{12})+\eta_{21}\alpha_{21}\beta f_1(S) (X_{11}+X_{12})+\eta\beta f_{1}(S)(X_{11}+X_{12})\\
&< (-\eta_{11}+\eta_{11}\alpha_{11}+\eta_{21}\alpha_{21}+\eta)\beta X_{11} + (- \eta_{12}+\eta_{11}\alpha_{11}+\eta_{21}\alpha_{21}+\eta)\beta X_{12}\\
&< -\epsilon_1\beta X_{11} - \epsilon_2\beta X_{12}
\end{split}
\end{equation*}
this last inequality is directly derived by conditions $(H_2)$ and $(H_3)$. 
\begin{equation*}
\begin{split}
I_2 &= (3)+(4) \\
&= -\eta_{21}\beta X_{21} -\eta_{22}\beta X_{22} \\
&+ \eta_{12}\alpha_{12}\beta f_2(S) (X_{21}+X_{22})+\eta_{22}\alpha_{22}\beta f_2(S) (X_{21}+X_{22})-\eta\beta f_{2}(S)(X_{21}+X_{22})\\
&= -\eta_{21}\beta X_{21} -\eta_{22}\beta X_{22} + (\eta_{12}\alpha_{12}+\eta_{22}\alpha_{22}-\eta)\beta f_2(S) (X_{21}+X_{22})\\
&\overset{(H_1)}{<} -\eta_{21}\beta X_{21} -\eta_{22}\beta X_{22}
\end{split}
\end{equation*}
Now for the other two terms $I_3$ and $I_4$:
\begin{equation*}
	\begin{split}
		I_3 &= (5a)\\
		& = \eta_{11}\alpha_{11}\mu_1f_1(S) + \eta_{21}\alpha_{21}\mu_1f_1(S) + \eta \mu_{1}f_{1}(S)
		< \eta_{11}\alpha_{11}\mu_1 + \eta_{21}\alpha_{21}\mu_1 + \eta\mu_{1} =: C_1
		\end{split}
\end{equation*}
\begin{equation*}
	\begin{split}
		I_4 &= (5b)\\
		&= \eta_{12}\alpha_{12}\mu_2f_2(S)+\eta_{22}\alpha_{22}\mu_2f_2(S)-\eta f_{2}(S)\mu_{2} = (\eta_{12}\alpha_{12}+\eta_{22}\alpha_{22}-\eta)f_{2}(S)\mu_{2}\\
		&< -\epsilon_0\mu_2 f_2(S) \overset{(A_2)}{<} \epsilon_0\mu_2\gamma-\epsilon_0\mu_2\gamma S =: C_2 - -\epsilon_0\mu_2\gamma S
		\end{split}
\end{equation*}
where 
\begin{itemize}
	\item[-]$0< \epsilon_0 < \eta-\eta_{12}\alpha_{12}-\eta_{22}\alpha_{22}$
	\item[-] $0< \epsilon_1 < \eta_{11} - (\eta_{11}\alpha_{11}+\eta_{21}\alpha_{21}+\eta)$
	\item[-]$0< \epsilon_2 < \eta_{12} - (\eta_{11}\alpha_{11}+\eta_{21}\alpha_{21}+\eta)$
\end{itemize}

Therefore
\begin{equation*}
    \begin{split}
    \mathcal{L}V(X,S) &= I_1 + I_2 + I_3 + I_4\\
	& <  -\epsilon_1\beta X_{11} - \epsilon_2\beta X_{12} - \eta_{21}\beta X_{21} -\eta_{22}\beta X_{22} +  C_1 - \epsilon_0\mu_2 f_2(S)\\
	& <  -\epsilon_1\beta X_{11} - \epsilon_2\beta X_{12} -\eta_{21}\beta X_{21} -\eta_{22}\beta X_{22}  + C_1 + C_2 - \epsilon_0\mu_2\gamma S \\
	& <  -\rho (\eta_{11}X_{11}+\eta_{12} X_{12}+\eta_{21} X_{21}+\eta_{22}X_{22}+\eta S) + C\\
	& =  -\rho V(X,S) + C
    \end{split}
\end{equation*}
where $\rho = \min\{\cfrac{\epsilon_1}{\eta_{11}}, \cfrac{\epsilon_2}{\eta_{12}}, 1, \cfrac{\epsilon_0\mu_2\gamma}{\eta}\}, \beta > 0$ and $C = C_1 + C_2$

Now we only need to find some $\eta_{lm}, \eta > 0$ satisfying the hypothesis $(\mathcal{H})$ to finish this proof.

As $\alpha_{11}+\alpha_{12} < 1$, $\cfrac{\alpha_{11}}{1-\alpha_{12}} <1< \cfrac{1-\alpha_{11}}{\alpha_{12}}  $. We note $\delta = \cfrac{1}{2}(\cfrac{1-\alpha_{11}}{\alpha_{12}}  - 1) $. Then the following values for $\eta_{lm}, \eta$
\begin{equation}\label{eta_0}
    \begin{cases}
	\eta_{11}=1,\eta_{12} = \cfrac{1-\alpha_{11}}{\alpha_{12}} - \delta > 1 = \eta^{11}\\
	\eta_{21} = \delta\cfrac{\alpha_{12}}{4\alpha_{21}}, \eta_{22} =\delta\cfrac{\alpha_{12}}{4\alpha_{22}}\\	
	\eta = 1-\alpha_{11} - \cfrac{1}{2}\delta\alpha_{12}
\end{cases}
\end{equation}
satisfy the condition $(\mathcal{H})$.
\end{proof}

\begin{remark}\label{ergo2}
For a more complicated version of our model :
$$\lambda^e(t) = f^{e}(S_{t-})(\mu^e + \sum_{e'\in\mathcal{E}} \int_0^t \phi^{e, e'}(t-s) dS^{e'}_s) $$
where $\mathcal{E}=\{+1,+2,-1,-2\}$, $\phi^{e, e'}(t)=\alpha^{e,e'}\beta e^{-\beta t}$ (exponential kernels). 
Using the same proof, we can prove that under the following conditions $(\mathcal{A}_1)$ and $(\mathcal{H}_1)$, the $(X,S)$ is a V-uniformly ergodic Markov process.\\
$(\mathcal{A}_1) = 
\begin{cases}
    f^{-1}(S)=0 \mbox{ when } S=1\\
    f^{-2}(S)=0 \mbox{ when } S=1,2\\
	\max(f^{-1}(S), f^{-2}(S)) \geq \gamma S \mbox{ for some } \gamma > 0 \mbox{ when } S \geq 3\\
	f^{+1}(S), f^{+2}(S) \leq 1 \mbox{ for all } S
\end{cases}
$

$(\mathcal{H}_1) = 
\begin{cases}
		\sum_e\eta_{e,-1}\alpha^{e,-1} < \eta\\
		\sum_e\eta_{e,-2}\alpha^{e,-2} < 2\eta\\
		\sum_e\eta_{e,+1}\alpha^{e,+1} + \eta < \eta_{+1,e'}, \forall e'\in \mathcal{E}\\
		\sum_e\eta_{e,+2}\alpha^{e,+2} + 2\eta < \eta_{+2,e'}, \forall e'\in \mathcal{E}
\end{cases}$\\

And the coercive function $V$ is 
$
V(X, S) = \sum_{e,e'\in\mathcal{E}}\eta_{e,e'}X_{e,e'} + \eta S
$
\end{remark}

\section{Log-likelihood function of spread model}\label{mle}
The log-likelihood function for the spread model is a function on $\mu$, $\alpha$ and $f$.
For brevity, we will only give the formula for the case where $L=1$ (only one decay). In order to distinguish from the above notations, the log-likelihood function is denoted by $\mathbb{L}$.
\begin{equation}
\begin{split}
\mathbb{L}(\alpha,\mu, f) &= \sum_{e=1}^{2K} (-\int_0^T \lambda^e(t)dt + \int_0^T\log \lambda^e(t)dS^e_t)\\
  &= \sum_{e=1}^{2K} \sum_{k=1}^{S^e(T)}\log(\mu^e + \sum_{e'=1}^{2K} \alpha^{ee'}\beta\int_0^{t_k^e}e^{-\beta(t_k^e-s)}dS^j_s) + \sum_{e=1}^{2K} \sum_{k=1}^{S^e(T)} \log {f^e(S_{t^e_k})}\\
  & - \sum_{e=1}^{2K} \int_0^T(\mu^e + \sum_{e'=1}^{2K} \alpha^{ee'}\beta\int_0^{t}e^{-\beta(t-s)}dS^{e'}_s)f^e(S_t)dt
\end{split}
\end{equation}
where $S^e(T)$ is the number of event $e$ in $[0,T]$, and $t^e_k$ is the timestamp where the $kth$ event (type $e$) occurs.

\section{More numerical results}\label{more_num_res}

\begin{figure}[h]
    \centering
    \begin{subfigure}[b]{\textwidth}
    \centering
    \includegraphics[width=0.65\textwidth]{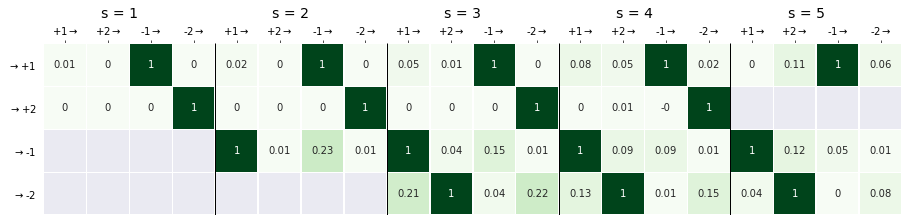}
    \caption{BNP}
    \end{subfigure}
    \begin{subfigure}[b]{\textwidth}
    \includegraphics[width=\textwidth]{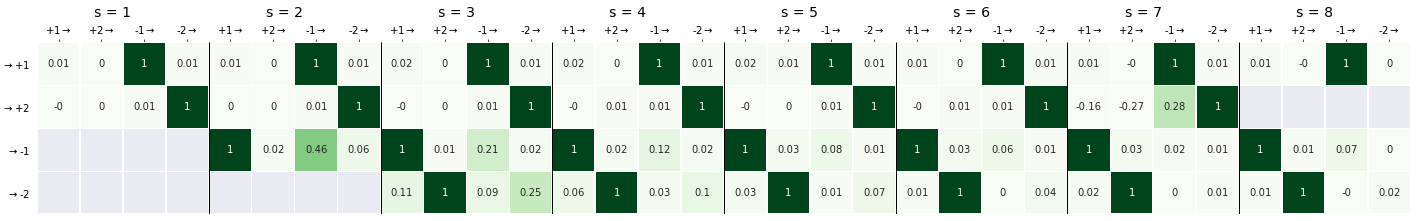}
    \caption{NOKIA}
    \end{subfigure}
    \caption{\textbf{Relative kernel integrated quantities} $\tilde I_{s, e}(e')$ for (a)BNP and (b) NOKIA. For each stock, for each value of $s$, an image is displayed showing $\tilde I_{s, e}(e')$ (defined by \eqref{eq:tildeI}) as a function of $e$ (vertical axis) and $e'$ (horizontal axis).}
    \label{fig:I}
\end{figure}

\begin{figure}[ht]
    \centering
    \begin{subfigure}[b]{\textwidth}
    \includegraphics[width=\textwidth]{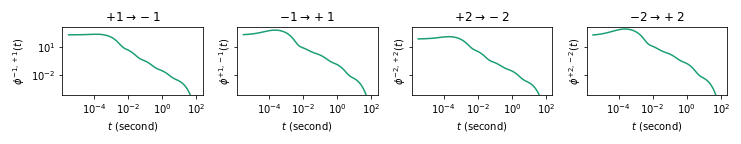}
    \caption{BNP}
    \label{fig:bnp_kern_contrariant}
    \end{subfigure}
    \begin{subfigure}[b]{\textwidth}
    \includegraphics[width=\textwidth]{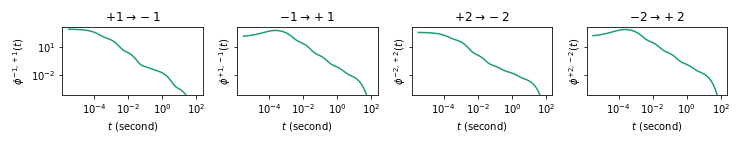}
    \caption{NOKIA}
    \label{fig:nokia_kern_contrariant}
    \end{subfigure}
    \begin{subfigure}[b]{0.52\textwidth}
    \includegraphics[width=\textwidth]{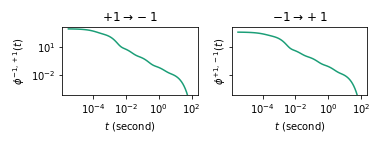}
    \caption{Cac40 Future}
    \label{fig:future_kern_contrariant}
    \end{subfigure}
    \caption{\textbf{Hawkes kernel shapes} for (a) BNP, (b) NOKIA, and (c) CAC40 Future. In this figure for each asset we present four contrariant Hawkes kernels (i.e., with $e$ and $e'$ with different signs (two contrariant Hawkes kernels for CAC40 Future) . 
    Each kernel $\phi^{e,e'}$ (labelled $e' \rightarrow e$ on the figure with $e,e' \in {\cal E}$) represents the influence of the past jumps of size $e'$ on the occurrence probability of a future jumps of size $e$. Each kernel is represented by a sum of $L=6$ exponentials, i.e.,  $\phi^{e,e'}(t)$, where $\phi^{e,e'}(t)=\sum_l^L \alpha_l^{e,e'}\beta_l e^{-\beta_l t}$, where $\beta_l=\frac{1}{\tau_l}$ with $\tau_l$ taken in $\{10^{-4}s, 10^{-3}s, ..., 10^1s\}$
    All the kernels are displayed on a log-log scale and show a power law behavior on a large range of scales (3, 4 or even 5 decades)}
    \label{fig:kern_contrariant}
\end{figure}

\begin{figure}
	\centering
	\begin{subfigure}[b]{\textwidth}
		\centering
		\includegraphics[width=0.9\textwidth]{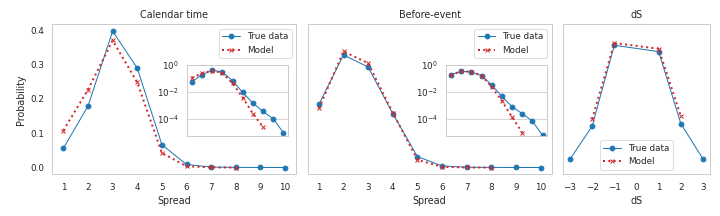}
		\caption{BNP}
		\label{fig:bnp_spread_dist}
	\end{subfigure}
	\begin{subfigure}[b]{\textwidth}
		\centering
		\includegraphics[width=0.9\textwidth]{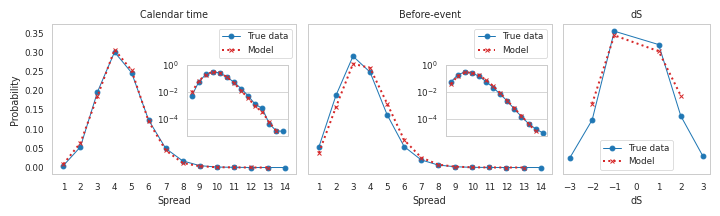}
		\caption{NOKIA}
		\label{fig:nokia_spread_dist}
	\end{subfigure}
	\begin{subfigure}[b]{\textwidth}
		\centering
		\includegraphics[width=0.9\textwidth]{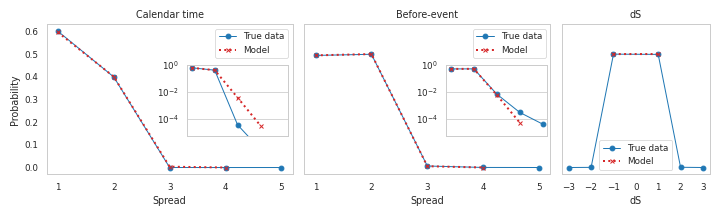}
		\caption{Cac40 Future}
		\label{fig:future_spread_dist}
	\end{subfigure}
	
	\caption{\textbf{Spread distributions}, comparison between true data and the the data obtained through simulation of (a) AXA, (b) BNP, (c) NOKIA, (d) CAC40 Future. The left-hand figures are the Calendar time distributions, the middle figures are the event time distributions, and the right-hand figures are the distributions of spread jumps size.}
	\label{fig:spread_dist}
\end{figure}

\label{app:interevent}
\begin{figure}[h]
     \centering
    \begin{subfigure}[b]{\textwidth}
         \centering
         \includegraphics[width=0.8\textwidth]{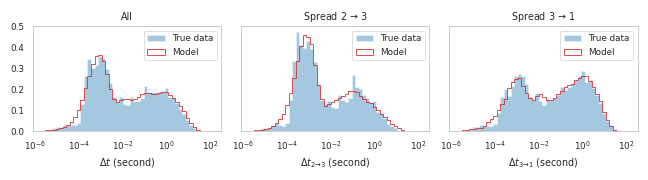}
         \caption{}
         \label{fig:bnp_delta_t}
     \end{subfigure}
     \begin{subfigure}[b]{\textwidth}
         \centering
         \includegraphics[width=0.8\textwidth]{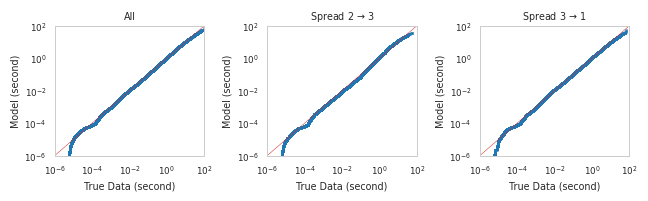}
         \caption{}
         \label{fig:bnp_delta_t_qq}
     \end{subfigure}
     \caption{\textbf{BNP Inter-event time distributions}. (a) Compare empirical inter-event-time distribution with simulated inter-event-time distribution. The x-axis is on log scale. (b) Log-log qq-plot of empirical inter-event-times (x-axis) \textit{vs.} simulated inter-event-times by model (y-axis).}
     \label{fig:bnp_interevent_time}
\end{figure}

\begin{figure}[h]
     \centering
    \begin{subfigure}[b]{\textwidth}
         \centering
         \includegraphics[width=0.8\textwidth]{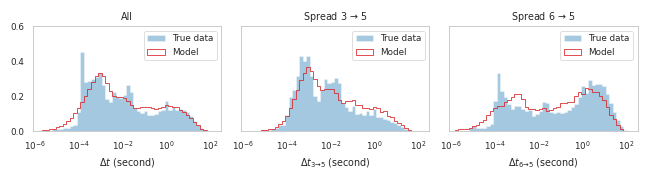}
         \caption{}
         \label{fig:nokia_delta_t}
     \end{subfigure}
     \begin{subfigure}[b]{\textwidth}
         \centering
         \includegraphics[width=0.8\textwidth]{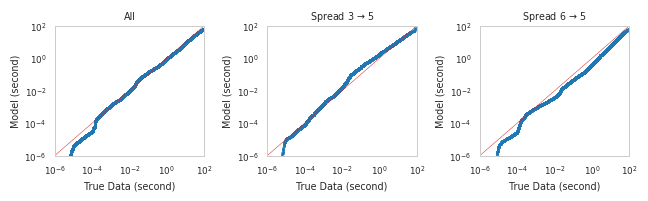}
         \caption{}
         \label{fig:nokia_delta_t_qq}
     \end{subfigure}
     \caption{\textbf{NOKIA Inter-event time distributions}. (a) Compare empirical inter-event-time distribution with simulated inter-event-time distribution. The x-axis is on log scale. (b) Log-log qq-plot of empirical inter-event-times (x-axis) \textit{vs.} simulated inter-event-times by model (y-axis).}
     \label{fig:nokia_interevent_time}
\end{figure}

\begin{figure}[h]
     \centering
    \begin{subfigure}[b]{\textwidth}
         \centering
         \includegraphics[width=0.8\textwidth]{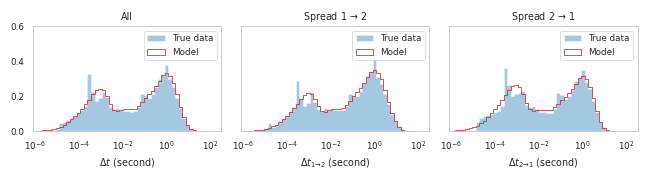}
         \caption{}
         \label{fig:future_delta_t}
     \end{subfigure}
     \begin{subfigure}[b]{\textwidth}
         \centering
         \includegraphics[width=0.8\textwidth]{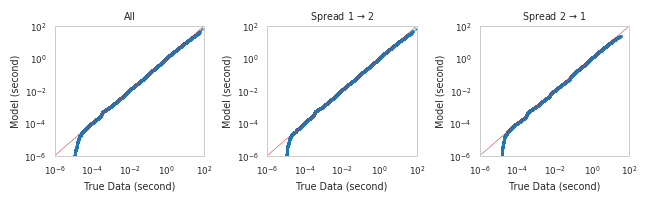}
         \caption{}
         \label{fig:future_delta_t_qq}
     \end{subfigure}
     \caption{\textbf{Cac40 index Future Inter-event time distributions} (a) Compare empirical inter-event-time distribution with simulated inter-event-time distribution. The x-axis is on log scale. (b) Log-log qq-plot of empirical inter-event-times (x-axis) \textit{vs.} simulated inter-event-times by model (y-axis).}
     \label{fig:future_interevent_time}
\end{figure}

\begin{figure}[h]
\centering
    \begin{subfigure}[b]{0.4\textwidth}
         \includegraphics[width=\textwidth]{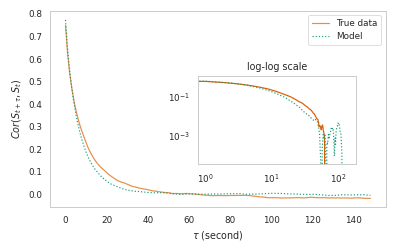}
         \caption{BNP}
         \label{fig:covbnp}
     \end{subfigure}
    \begin{subfigure}[b]{0.4\textwidth}
         \includegraphics[width=\textwidth]{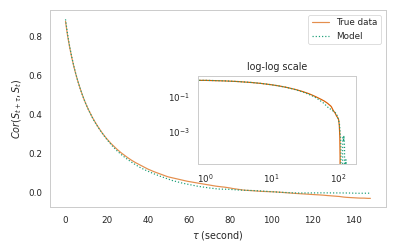}
         \caption{NOKIA}
         \label{fig:covnokia}
     \end{subfigure}
    \begin{subfigure}[b]{0.4\textwidth}
         \includegraphics[width=\textwidth]{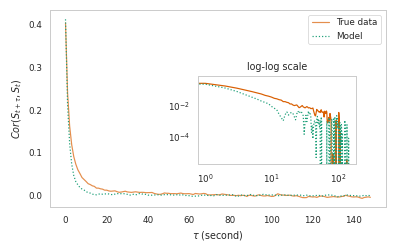}
         \caption{Cac40 Future}
         \label{fig:covfuture}
     \end{subfigure}
     \caption{\textbf{Spread autocorrelation} for (a) BNP, (b) Nokia, (c) Cac40 index Future. True data versus model-simulated data. Each plot corresponds to a different asset. Quadratic variations were used for estimation. Eight 15min time-slots (between 10am and 12pm) were used everyday for true data in order to avoid intraday seasonal effects. 2 hours slots were used for model-simulated data.}
     \label{fig:spread-cor}
\end{figure}

\begin{figure}[h]
     \centering
    \includegraphics[width=\textwidth]{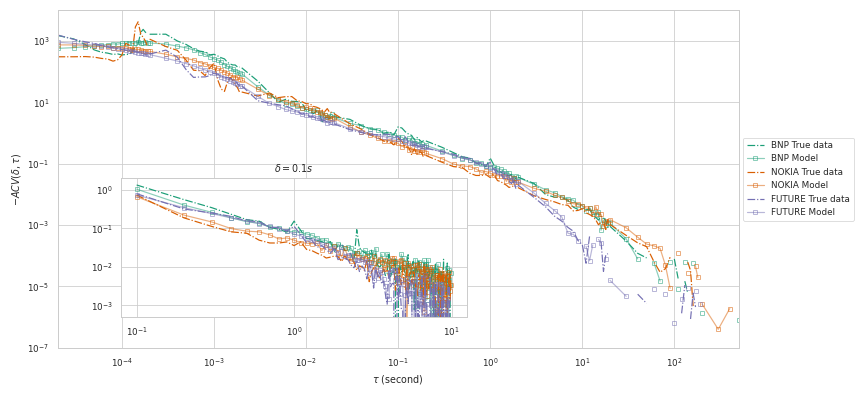}
    \caption{\textbf{Autocovariance of spread increments}. The $-ACV(\delta,\tau)$ function for the different assets. For each asset the curves that are displayed (in the main plot and in the inset) follow the same protocol as the one of Fig. \ref{fig:axa_acv_colors}}
    \label{fig:all_assets_acv}
\end{figure}
\end{appendices}
\end{document}